\documentclass[journal]{IEEEtran}
\IEEEoverridecommandlockouts

\title{Wideband Illumination with Liquid Crystal Reconfigurable Intelligent Surfaces: Modeling, Design, and Experimental Tests}

\author{
\IEEEauthorblockN{Mohamadreza Delbari$^{\orcidlink{0000-0002-4768-5874}}$, %\textit{Student Member, IEEE,} 
Robin Neuder$^{\orcidlink{0000-0002-2909-7841}}$, %\textit{Student Member, IEEE,} 
Alejandro Jim\'{e}nez-S\'{a}ez$^{\orcidlink{0000-0003-0468-1352}}$, %\textit{Member, IEEE,} 
Qikai Zhou$^{\orcidlink{0009-0005-3434-9483}}$, %\textit{Senior Member, IEEE,} 
and Vahid Jamali$^{\orcidlink{0000-0003-3920-7415}}$ %\textit{Senior Member, IEEE}
}
\IEEEauthorblockA{
\thanks{The work of M. Delbari, Q. Zhou, and V. Jamali was supported in part by the Deutsche Forschungsgemeinschaft (DFG, German Research Foundation) within the Collaborative Research Center Multi-Mechanisms Adaptation for the Future Internet (MAKI) (SFB 1053) under Project-ID 210487104; in part by the LOEWE initiative (Hesse, Germany) within the emergenCITY Centre under Grant LOEWE/1/12/519/03/05.001(0016)/72, and in part by the German Federal Ministry for Research, Technology and Space (BMFTR) under the program of ``Souverän. Digital. Vernetzt.'' joint project Open6GHub plus (Project-ID 16KIS2407). Neuder and Jim\'{e}nez-S\'{a}ez's work was supported by the Deutsche Forschungsgemeinschaft (DFG, German Research Foundation) – Project-ID 287022738 – TRR 196 MARIE within project C09. In addition, thanks goes to Merck Electronics KGaA, Darmstadt, Germany, for providing the liquid crystal mixture. This paper was presented in part at the 2025 IEEE Global Communications Workshops (Globecom, workshops) 
in \cite{delbari2025wideband}. (\textit{Corresponding author: Mohamadreza Delbari.})\\
Mohamadreza Delbari, Qikai Zhou and Vahid Jamali are with the Resilient Communication Systems Laboratory, Technische Universität Darmstadt, 64283 Darmstadt, Germany (e-mail: mohamadreza.delbari@tu-darmstadt.de; qikai.zhou@stud.tu-darmstadt.de;
 vahid.jamali@tu-darmstadt.de).
Robin Neuder and Alejandro Jim\'{e}nez-S\'{a}ez are with the Institute of Microwave Engineering and Photonics, Technische Universität Darmstadt, 64283 Darmstadt, Germany (e-mail: robin.neuder@tu-darmstadt.de;
 alejandro.jimenez\_saez@tu-darmstadt.de).}
 }
 }

\usepackage{amsfonts,amsmath,amsthm,bm}
\usepackage{cite}
\usepackage{amsmath,amssymb,amsfonts}
\usepackage{algorithmic}
\usepackage{graphicx}
\usepackage{textcomp}
\usepackage{xcolor}
\usepackage[T1]{fontenc}
\usepackage{comment}
\usepackage{eucal}
\usepackage{algorithm}
\usepackage{algorithmic}
\usepackage[acronyms,nonumberlist,nopostdot,nomain,nogroupskip]{glossaries}
\usepackage[font=footnotesize]{caption}
\usepackage{subcaption}
\usepackage{booktabs,flushend}

\usepackage{hyperref}
\hypersetup{
    colorlinks=true,
    linkcolor=blue,
    citecolor=blue,
    urlcolor=black
}
\usepackage{orcidlink}
\usepackage{microtype}

\newtheorem{lem}{Lemma}

\newtheorem{remk}{Remark}

\newcommand{\defeq}{\triangleq}
\def\bOmega{\boldsymbol{\Omega}}

\newcommand{\e}{\mathsf{e}}
\newcommand{\jj}{\mathsf{j}}
\newcommand{\Herm}{\mathsf{H}}
\newcommand{\Trans}{\mathsf{T}}
\newcommand{\x}{\mathsf{x}}
\newcommand{\y}{\mathsf{y}}
\newcommand{\z}{\mathsf{z}}
\newcommand{\bA}{\mathbf{A}}
\newcommand{\bx}{\mathbf{x}}

\newcommand{\bs}{\mathbf{s}}
\newcommand{\bS}{\mathbf{S}}

\newcommand{\bH}{\mathbf{H}}
\newcommand{\ba}{\mathbf{a}}

\newcommand{\bC}{\mathbf{C}}
\newcommand{\bu}{\mathbf{u}}

\newcommand{\bI}{\mathbf{I}}

\newcommand{\bh}{\mathbf{h}}

\newcommand{\bq}{\mathbf{q}}
\newcommand{\bp}{\mathbf{p}}

\newcommand{\kk}{\kappa}

\newcommand{\bSigma}{\boldsymbol{\Sigma}}
\newcommand{\bGamma}{\boldsymbol{\Gamma}}
\newcommand{\bmu}{\boldsymbol{\mu}}

\newcommand{\blambda}{\boldsymbol{\lambda}}

\newcommand{\bbeta}{\boldsymbol{\beta}}
\newcommand{\bPhi}{\boldsymbol{\Phi}}

\newcommand{\Ex}{\mathbb{E}}

\newcommand{\diag}{\mathrm{diag}}
\newcommand{\real}{\mathrm{Re}}

\newcommand{\tr}{\mathrm{tr}}
\newcommand{\rank}{\mathrm{rank}}

\newcommand{\tx}{\mathrm{tx}}

\newcommand{\SNR}{\mathrm{SNR}}
\newcommand{\RS}{\mathrm{SR}}
\newcommand{\RIS}{\mathrm{RIS}}

\def\bomega{\boldsymbol{\omega}}

\def\bzero{\boldsymbol{0}}
\def\bone{\boldsymbol{1}}
\def\Cset{\mathbb{C}}

\def\Rset{\mathbb{R}}

\def\LOS{\mathrm{LOS}}

\def\tmax{\mathrm{max}}
\def\tx{\mathrm{tx}}
\def\rx{\mathrm{rx}}
\def\BS{\mathrm{BS}}
\def\RIS{\mathrm{RIS}}

\def\SNR{\mathrm{SNR}}

\def\eff{\mathrm{eff}}

\def\sCN{\mathcal{CN}}

\def\Pset{\mathcal{P}}
\def\bigO{\mathcal{O}}
\usepackage{xcolor}

\newacronym{RIS}{RIS}{reconfigurable intelligent surface}
\newacronym{QoS}{QoS}{quality of service}
\newacronym{LC}{LC}{liquid crystal}
\newacronym{SNR}{SNR}{signal-to-noise ratio}
\newacronym{TDMA}{TDMA}{time-division multiple-access}
\newacronym{BS}{BS}{base station}
\newacronym{MU}{MU}{mobile user}
\newacronym{ME}{ME}{mobile eavesdropper}
\newacronym{NF}{NF}{near-field}
\newacronym{Tx}{Tx}{transmitter}
\newacronym{Rx}{Rx}{receiver}
\newacronym{AWGN}{AWGN}{additive white Gaussian noise}
\newacronym{wrt}{w.r.t.}{with respect to}
\newacronym{RDE}{RDE}{Reaction-Diffusion Equation}
\newacronym{PDE}{PDE}{partial differential equation}
\newacronym{UPA}{UPA}{uniform planar array}
\newacronym{ULA}{ULA}{uniform linear array}
\newacronym{AO}{AO}{alternative optimization}
\newacronym{SOCP}{SOCP}{second-order cone programming}
\newacronym{AoD}{AoD}{angle of departure}
\newacronym{LOS}{LOS}{line of sight}
\newacronym{nLOS}{NLOS}{non-LOS}
\newacronym{MIMO}{MIMO}{multiple-input multiple-output}
\newacronym{RS}{SR}{secure rate}
\newacronym{SDP}{SDP}{semi-definite programming}
\newacronym{6G}{6G}{sixth generation}
\newacronym{CSI}{CSI}{channel state information}
\newacronym{PIN}{PIN}{positive-intrinsic-negative}
\newacronym{RF}{RF}{radio frequency}
\newacronym{MEMS}{MEMS}{micro-electro-mechanical system}
\newacronym{mmWave}{mmWave}{millimeter wave}
\newacronym{OFDM}{OFDM}{orthogonal frequency division multiplexing}
\newacronym{LSE}{LSE}{log-sum-exp}

\begin{document}

\maketitle

\begin{abstract}
Despite their advantages in scalability and energy efficiency for realizing extremely large \glspl{RIS}, \gls{LC} elements exhibit frequency-dependent phase-shift characteristics. Ignoring this inherent property can severely impact the overall system performance, particularly in wideband scenarios. This challenge is particularly problematic for physical layer security, as even slight frequency-induced phase deviations can cause significant unintended signal leakage.
%\Gls{LC} is a promising hardware solution for implementing large \glspl{RIS}, as it is cost-effective, energy-efficient, scalable, and capable of providing continuous phase shifts with low power consumption. However, the phase-shift response of \gls{LC}-based \glspl{RIS} is inherently frequency-dependent. If unaddressed, this characteristic leads to performance degradation, particularly in wideband scenarios. This issue is especially critical in secure communication applications, where minor phase-shift variations across elements can result in considerable information leakage.
This paper addresses these frequency-induced variations by developing a physics-based model for an \gls{LC} unit cell across varying frequencies and proposing a novel phase-shift design framework that maximizes secure communication across all subcarriers. Furthermore, given the large number of elements in \gls{mmWave} \gls{LC}-\glspl{RIS}, acquiring full \gls{CSI} is often impractical. Therefore, we optimize the phase shifts based solely on the locations of the legitimate \glspl{MU} and potential eavesdroppers. 
Additionally, we optimize the \gls{RIS} to illuminate a specific spatial zone instead of focusing on a precise target point.
This area illumination strategy improves robustness against inaccurate positioning data for both the \glspl{MU} and eavesdroppers, and reduces the overhead associated with frequent re-acquisition of positioning data.
%Moreover, rather than targeting a single user point, the \gls{RIS} is designed to illuminate a broader area. This approach enhances communication reliability for the \glspl{MU} and mitigates performance degradation caused by location estimation errors for both the \glspl{MU} and eavesdroppers.
To solve the formulated problem, we introduce both a \gls{SDP}-based solution and a low-complexity heuristic method. While the \gls{SDP}-based approach yields superior performance, it incurs higher computational complexity. Conversely, the scalable method exhibits a much slower scaling of complexity with respect to the number of \gls{RIS} elements, which makes it highly suitable for extremely large \glspl{RIS}. Our simulation results demonstrate that both proposed algorithms improve the secrecy rate compared to baseline methods that neglect frequency-dependent effects. In our considered setup operating at 60 GHz, the proposed \gls{SDP}-based and low-complexity frameworks achieve a secrecy rate of approximately 2 and 1 bits/symbol, respectively, while benchmark schemes cannot guarantee a non-zero secrecy rate across the entire 8 GHz bandwidth. Finally, the proposed wideband phase-shift design is validated through experimental evaluations on an \gls{LC}-\gls{RIS} setup, which has confirmed its practical effectiveness.

\begin{IEEEkeywords}
Liquid Crystal, reconfigurable intelligent surface, wideband communications, secure communications, frequency-dependency.
 \end{IEEEkeywords}
\end{abstract}
\glsresetall
\section{Introduction}
%\glsentryshort{RIS}
\Glspl{RIS} have emerged as a promising technology for next-generation wireless networks by enabling highly programmable propagation environments \cite{Wu2019,Basar2019,di2019smart,najafi2020physics}. Recently, \gls{LC} technology has gained attraction as a scalable and energy-efficient solution for deploying large-scale \glspl{RIS}, particularly at \gls{mmWave} frequencies \cite{neuder2024architecture,aboagye2022design,delbari2026fast}. The literature has extensively investigated \gls{LC}-based \glspl{RIS} from both theoretical and practical perspectives. For example, \cite{neuder2024architecture} demonstrated a compact hardware prototype, while \cite{aboagye2022design} evaluated its potential in visible light communications. System-level metrics, including cost and energy efficiency, were benchmarked against competing technologies in \cite{jimenez2023reconfigurable}. Furthermore, recent works have modeled the temporal transition behaviors of \glspl{LC} in time-division multiple-access setups \cite{delbari2024fast,delbari2026fast} and analyzed their thermal dependencies \cite{delbari2024temperature,gholian2025temperature}. In addition, \cite{Li2024,Wang2025} discussed and analyzed the loss model of \gls{LC}-\gls{RIS}. Together, these studies highlight a growing momentum toward refining the practical viability of \gls{LC}-driven \glspl{RIS}.

The phase-shift response of \gls{LC}-based \glspl{RIS} is fundamentally frequency-dependent \cite{neuder2024architecture}, which can cause severe performance bottlenecks in wideband deployments. Existing literature addressing wideband challenges includes near-field beamforming \cite{Cui2024,Yu2024}, advanced signal processing \cite{Qian2025}, and novel system architectures \cite{He2021,Su2023,Mo2024} to mitigate beam splitting, which all rely predominantly on idealized \gls{RIS} models. These studies typically assume that phase shifts are either frequency-independent \cite{Yu2024,Qian2025,He2021,Su2023,Mo2024} or perfectly adjustable per subcarrier \cite{Cui2024}. Furthermore, even approaches utilizing more realistic hardware models \cite{li2021intelligent} generally depend on the restrictive assumption of perfect user location data for beamforming.

The performance degradation due to the frequency-dependency is particularly detrimental in physical layer security, where \gls{OFDM} transmissions could suffer from unintended information leakage. Physical layer secrecy in \gls{RIS}-assisted systems has been explored in several works in the literature \cite{Shen2019,Asaad2022,Chu2021,Xiu2021,huang2025secrecy}. For instance, studies have investigated the maximization of weighted secrecy sum-rates in generic \gls{MIMO} wiretap settings using fractional programming and alternating optimization \cite{Asaad2022}, as well as leveraging artificial noise alongside \gls{RIS} phase shifts to further degrade eavesdropper reception \cite{Chu2021}. Hardware-constrained scenarios have also been addressed, such as \gls{mmWave} systems employing low-resolution digital-to-analog converters, where algorithms based on successive convex approximation and block coordinate descent are used to mitigate hardware losses and optimize phase shifts \cite{Xiu2021}. Furthermore, robust designs have been developed to tackle the practical challenges of imperfect \gls{CSI} in multi-user and multi-eavesdropper settings, utilizing methods like \gls{SDP} combined with iterative hybrid optimization algorithms to maintain high confidentiality rates despite channel estimation errors \cite{huang2025secrecy}. 

In contrast, this paper tackles the phase-shift design problem under the practical hardware constraints of \gls{LC}-\glspl{RIS}, operating without precise knowledge of legitimate \glspl{MU} and eavesdropper locations. Specifically, we propose a secure illumination strategy that maximizes signal strength within a target zone occupied by legitimate \glspl{MU}, while actively suppressing power in surrounding areas vulnerable to eavesdropping. By explicitly incorporating the frequency-dependent response of individual \gls{LC}-\gls{RIS} elements into a novel optimization framework, we bridge the gap between idealized wideband beamforming theory and practical, secure spatial coverage\footnote{Although the proposed model and design are simulated within a secrecy context, they can also be applied to other scenarios, e.g., serving a single user (see Section~\ref{sec: Experimental Results}).}. To the best of the authors' knowledge, this is the first study to investigate the frequency-dependent characteristics of \gls{LC}-\glspl{RIS} within a wideband context. Our main contributions are summarized as follows:

\begin{itemize}
\item \textbf{Physics-based Modeling of \gls{LC}-\gls{RIS} Phase-shift over Different Frequencies:} First, we develop a rigorous physical model characterizing how frequency variations dictate the phase-shift response of \gls{LC}-\gls{RIS} elements. This model characterizes the phase shift induced at a given frequency as a function of the reference phase shift e.g., at the central carrier frequency.
\item \textbf{Problem Formulation:} Next, we introduce a realistic physical layer security framework including multiple legitimate users and a mobile eavesdropper. We assume the exact locations and instantaneous \gls{CSI} of all parties are unknown, relying solely on approximate user vicinity data. This formulation reflects the practical challenges of securing \gls{RIS}-aided communications in real-world environments where users are movable and the location estimation errors are unavoidable. Then, we formulate a non-convex optimization problem to jointly design the \gls{LC}-\gls{RIS} phase shifts while accounting for their hardware-specific frequency dependencies.
\item \textbf{Proposed Algorithms:} To address the inherent non-convexity, we apply algebraic reformulations and propose a highly efficient, sub-optimal solution leveraging rank relaxation and \gls{SDP}. Although the \gls{SDP}-based method achieves excellent performance, its computational complexity scales cubically with the number of \gls{RIS} elements. Alternatively, in addition to the conference version \cite{delbari2025wideband}, we propose a low-complexity method that exhibits significantly lower computational complexity, making it highly suitable for extremely large \glspl{RIS}.
\item \textbf{Performance Evaluation:} Subsequently, we validate the proposed algorithms through extensive simulations in addition to the conference version \cite{delbari2025wideband}. First, we evaluate the convergence behavior and empirically analyze the computational complexity of the two methods by comparing their execution times. Next, we assess their performance against three benchmark schemes that ignore frequency-dependent phase responses. The results demonstrate that explicitly accounting for these characteristics in the \gls{LC}-\gls{RIS} design significantly enhances the guaranteed secrecy rate across all \gls{OFDM} subcarriers. Furthermore, we show that while the secrecy rate achieved by the \gls{SDP}-based algorithm surpasses that of the scalable approach, this performance gain comes at the cost of significantly higher computational complexity.
\item \textbf{Experimental Verification:} Finally, unlike the conference version \cite{delbari2025wideband}, we validate our approach through real-world experiments in an indoor scenario. We observe that the optimized phase-shift design indeed enhances the received power at the \gls{MU} location compared to a benchmark scheme that neglects the impact of frequency dependency inherent to the \gls{RIS} elements.
\end{itemize}

The remainder of this paper is organized as follows. In Section \ref{sec: System and Channel Models}, we present the system, channel, and frequency response of \gls{LC} models. Section \ref{sec: OFDM LC-RIS Phase-shift Design} details the proposed \gls{SDP}-based optimization and low-complexity method frameworks, followed by simulation results in Section \ref{sec: Performance Comparison}. Then, we test our algorithm in an experimental implementation in Section~\ref{sec: Experimental Results}. Finally, Section \ref{sec: Conclusion} concludes the paper.

\begin{figure}
    \centering
    \includegraphics[width=0.4\textwidth]{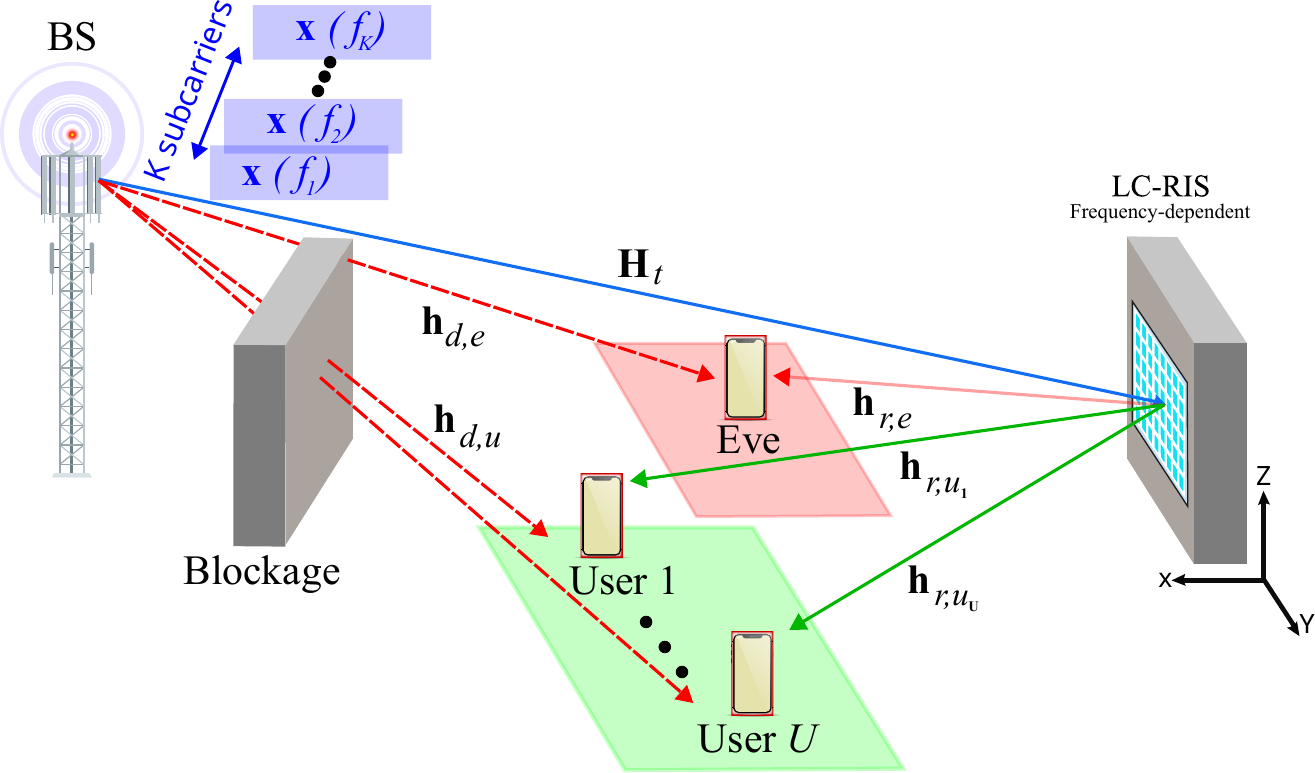}
    \caption{An illustration of a wireless communication setup in which a \gls{BS} communicates with authorized users through an \gls{RIS}, while simultaneously minimizing the signal intercepted by an eavesdropper.}
    \label{fig:system model}
    \vspace{-5 mm}
\end{figure}

\textit{Notation:} Bold capital and small letters are used to denote matrices and vectors, respectively.  $(\cdot)^\Trans$, $(\cdot)^\Herm$, $\rank(\cdot)$, $\tr(\cdot)$, $\det(\cdot)$, $\odot$, and $(\cdot)^{\circ}$ denote the transpose, Hermitian, rank, trace, determinant of a matrix, Hadamard product, and Hadamard power, respectively. Moreover, $\diag(\bA)$ is a vector that contains the main diagonal entries of matrix $\bA$, $\bone_n$ and $\bzero_n$ denote column vectors of size $n$ whose elements are all ones and zeros, respectively. $\|\bA\|_*=\sum_i \sigma_i$, $\|\bA\|_2=\max_i \sigma_i$, $\|\bA\|_F$, and $\blambda_{\max}(\bA)$ denote the respectively nuclear, spectral, and Frobenius norms of a Hermitian matrix $\bA$, and eigenvector associated with the maximum eigenvalue of matrix $\bA$, where $\sigma_i,\,\,\forall i$, are the singular values of $\bA$. Furthermore, $[\bA]_{m,n}$ and $[\ba]_{n}$ denote the element in the $m$th row and $n$th column of matrix $\bA$ and the $n$th entry of vector $\ba$, respectively. $x^+$ denotes as $\max\{x,0\}$. Moreover, $\Rset$ and $\Cset$ represent the sets of real and complex numbers, respectively, $\jj$ is the imaginary unit, and $\Ex\{\cdot\}$  represents expectation. $\mathrm{rand}(N)$ denotes a $N\times1$ vector where each element is generated independently and uniformly from 0 to 1. $\mathcal{CN}(\bmu,\bSigma)$ denotes a complex Gaussian random vector with mean vector $\bmu$ and covariance matrix $\bSigma$. Finally, $\bigO(\cdot)$ represents the big-O notation and $|\Pset|$ is the cardinality of set $\Pset$.

\section{System, Channel, and LC Models}
\label{sec: System and Channel Models}
In this section, we firstly present the system model, including $U$ legitimate users and a mobile eavesdropper, followed by the channel model. Then, we discuss the \gls{LC}-\gls{RIS} phase-shift model principle. Finally, we define the secrecy rate used throughout this paper.

\subsection{System Model}
\label{sec: system model}
In this paper, we study a wideband downlink communication scenario involving a \gls{BS} equipped with $N_t$ transmitter antennas, an \gls{LC}-\gls{RIS} with $N$ \gls{LC}-based unit cells, a single-antenna eavesdropper, and $U$ single-antenna legitimate users. The received signal frequency representations at \gls{ME} ($y_e(f_k)\in\Cset$) and $u$th legitimate \gls{MU} ($y_u(f_k)\in\Cset$) are expressed as
\begin{align}
\label{Eq:system model user}
	y_g(f_k) = &\big(\bh_{d,g}^\Herm(f_k) + \bh_{r,g}^\Herm(f_k) \bGamma(f_k) \bH_t(f_k) \big) \bx(f_k)\! +\!n_g,\\
    &\quad\,\, g=\{u_1,\cdots,u_U,e\}, \quad f_k\in\{f_1, \cdots, f_K\}.\nonumber
\end{align}
Let $f_k = f_1, \cdots, f_K$ denote the subcarriers in the \gls{OFDM} setup, where $K$ is the total number of subcarriers. $\bx(f_k)\in\Cset^{N_t}$ represents the transmit signal vector for the legitimate users on the $f_k$th subcarrier, and  $n_u\in\Cset$ ($n_e\in\Cset$) is the \gls{AWGN} at the legitimate user (eavesdropper), i.e., $n_u,n_e\sim\sCN(0,\sigma_n^2)$, with $\sigma_n^2$ being the noise power. We adopt hybrid beamforming, where for each subcarrier, the transmit signal vector $\bx(f_k)$ is given by $\bx(f_k) = \bq s(f_k)$ where $\bq\in \Cset^{N_t}$ is beamforming vector on the \gls{BS}, and $s(f_k)\in \Cset^{N_s}$ is the transmitted data symbol at subcarrier $k$ with $\Ex\{|s(f_k)|^2\}=1,\,\forall k$. The beamformer must satisfy the transmit power constraint $\|\bq\|^2 \leq P_t$, where $P_t$ denotes the maximum allowable transmit power. The \gls{RIS} reflection is characterized by a diagonal matrix $\bGamma(f_k) \in \Cset^{N \times N}$, whose $n$th diagonal entry is given by $[\bGamma]_n(f_k) = [\bOmega]_n \e^{\jj[\bomega]_n(f_k)}$, where $[\bomega]_n(f_k)$ and $[\bOmega]_n$ denote the phase shift and reflection amplitude of the $n$th unit cell, respectively. Based on the design considerations of \glspl{LC}-\glspl{RIS} in the relevant frequency band \cite{neuder2024architecture}, we assume negligible amplitude variation across unit cells, i.e., $[\bOmega]_n \approx 1,\ \forall n$ in all subcarriers. Moreover,  $\bh_{d,g}(f)\in\Cset^{N_t}, \bH_t(f)\in\Cset^{N\times N_t}$, and $\bh_{r,g}(f)\in\Cset^{N}$ denote the \gls{BS}-\{MU, ME\}, BS-RIS, and RIS-\{MU, ME\} frequency selective channel matrices, respectively, where $g\in\{u_1,\cdots,u_U,e\}$, and $f$ is frequency.

\subsection{Channel Model}
\label{sec: channel model}
Because \gls{mmWave} \gls{RIS} deployments are often elevated to resolve the impact of the blockages, the resulting channels are almost dominated by \gls{LOS} propagation rather than \gls{nLOS} links. Additionally, the large aperture size of \gls{LC}-\gls{RIS} arrays dictates that \glspl{MU} and \gls{BS} frequently reside in the radiating \gls{NF} region, which necessitates the use of a generalized \gls{NF} \gls{MIMO} Rician channel model proposed in \cite{delbari2025near} for accurate system characterization \cite{Liu2023nearfield,delbari2024nearfield}. To simplify the notation, we discuss the general model $\bH\in\Cset^{N_\rx\times N_\tx}$ where $N_\rx$ and $N_\tx$ are the number of \gls{Rx} and \gls{Tx} antennas, respectively.
%Given the assumption of an extremely large \gls{LC}-\gls{RIS}, the distances between the RIS and both the \gls{BS} and the \gls{MU} may fall within the \gls{NF} regime of the RIS \cite{Liu2023nearfield,delbari2024nearfield}. Therefore, an \gls{NF} channel model is adopted. Furthermore, RISs are typically installed at elevated heights, ensuring \gls{LOS} links between the RIS and both the BS and \gls{MU}s. High-frequency further emphasizes the dominance of these LOS links over \gls{nLOS} links. As a result, the channels are modeled using the generalized \gls{NF} \gls{MIMO} Rician model proposed in \cite{delbari2025near}. 
This \gls{MIMO} channel model can be written as
\begin{equation}
\label{eq: channel model}
\bH=c_0(\bH^{\mathrm{LOS}}+\sum_{r=1}^R\bar{k}_r\bar{\bH}_r+\tilde{k}_r\tilde{\bH}_r),
\end{equation}
where $\bH^\LOS$ denotes the LOS \gls{NF} channel matrix, and $\bar{\bH}_r$ is the deterministic component of the \gls{nLOS} channel resulting from reflections off large objects in $r$th path, such as the ground. Furthermore, $\tilde{\bH}_r$ represents the stochastic component of the \gls{nLOS} channel for the $r$-th reflector. The variables $c_0$, $\bar{k}_r$, and $\tilde{k}_r$ denote the channel amplitude of the \gls{LOS} path and the $K$-factors for the deterministic and stochastic parts of the $r$-th nLOS channel, respectively \cite{delbari2025near}. In \gls{NF} regime, $\bH^\LOS$ and $\bar{\bH}_r,\,\forall r$ are functions of the locations of antennas in order to capture the underlying spherical wave propagation. This leads to \cite{Liu2023nearfield}
\begin{subequations}
\label{eq: channel models}
\begin{align}
	[\bH^\LOS]_{m,n} &= \, \e^{\jj\kk\|\bu_{\rx,m}-\bu_{\tx,n}\|}\label{Eq:LoSnear},\\
    [\bar{\bH}_r]_{m,n} &= \, \e^{\jj\kk\|\bu_{\rx,m}^r-\bu_{\tx,n}\|},\\
    [\tilde{\bH}_r]_{m,n} &\sim\sCN(0,1),
\end{align}
\end{subequations}
where $\bu_{\rx,m}$ and $\bu_{\tx,n}$ are the locations of the $m$th \gls{Rx} antenna and the $n$th \gls{Tx} antenna, respectively. Moreover, $\bu_{\rx,m}^r$ is the image of the location $\bu_{\rx,m}$ reflected by the $r$th reflector. In addition, $\kk=2\pi f/c$ is the wave number with $c$ being the speed of light in vacuum.  This model is applicable to all channel components $\bH_t$, $\bh_{r,g}$, and $\bh_{d,g}$, for $g\in\{u_1,\cdots,u_U,e\}$, with appropriate adaptations.

We include an extra barrier penetration in \gls{BS}-\glspl{MU} \gls{LOS} link to model the existence of a blockage based on the model suggested in \cite{Phillips2013}. Therefore, we assume $\bh_{d,g} \approx \bzero_{N_t},\,\forall g\in\{u_1,\cdots,u_U,e\},$ in optimization part (Section~\ref{sec: OFDM LC-RIS Phase-shift Design}). Nevertheless, we consider the impact of this blockage in the simulation results (Section~\ref{sec: Performance Comparison}).
%For the \gls{LOS} link between the \gls{BS} and the \gls{MU}s, we incorporate an additional high penetration loss due to a blockage, as suggested in \cite{Phillips2013}. Consequently, in the optimization section, we assume $\bh_{d,g} \approx \bzero_{N_t},\,\forall g\in\{u_1,\cdots,u_U,e\}$ while we account for its impact in the simulation results in Section~\ref{sec: Performance Comparison}.

\subsection{Frequency Response of an LC Cell}
\label{Frequency response of the LC cell}
This subsection details the physical mechanism by which \gls{LC} molecules impart a phase shift onto incident electromagnetic waves, as well as the frequency-dependent nature of this response. \gls{LC}-based \glspl{RIS} control signal reflections by leveraging the tunable electromagnetic characteristics of \gls{LC} molecules, whose spatial orientation can be dynamically altered via an external biasing voltage \cite{jimenez2023reconfigurable}. This voltage-driven reorientation directly changes the effective permittivity of the \gls{LC} medium, which in turn dictates the phase shift generated by each individual \gls{RIS} element.

Due to the elongated, rod-like structure of \gls{LC} molecules, their electromagnetic response is highly anisotropic, depending heavily on the alignment between the incident \gls{RF} electric field ($\vec{E}_{\rm RF}$) and the molecules' principal axes. Specifically, when ($\vec{E}_{\rm RF}$) aligns with the major axis of the molecules, the effective permittivity increases, resulting in a larger induced phase shift, as depicted in Fig.~\ref{fig:V_phase}. Consequently, continuously tuning the applied voltage allows for precise control over the molecular orientation. This capability empowers the \gls{RIS} to dynamically shape signal reflections, thereby facilitating a fully programmable wireless propagation environment.
\begin{figure}[tbp]
	\centering
    \includegraphics[width=0.4\textwidth]{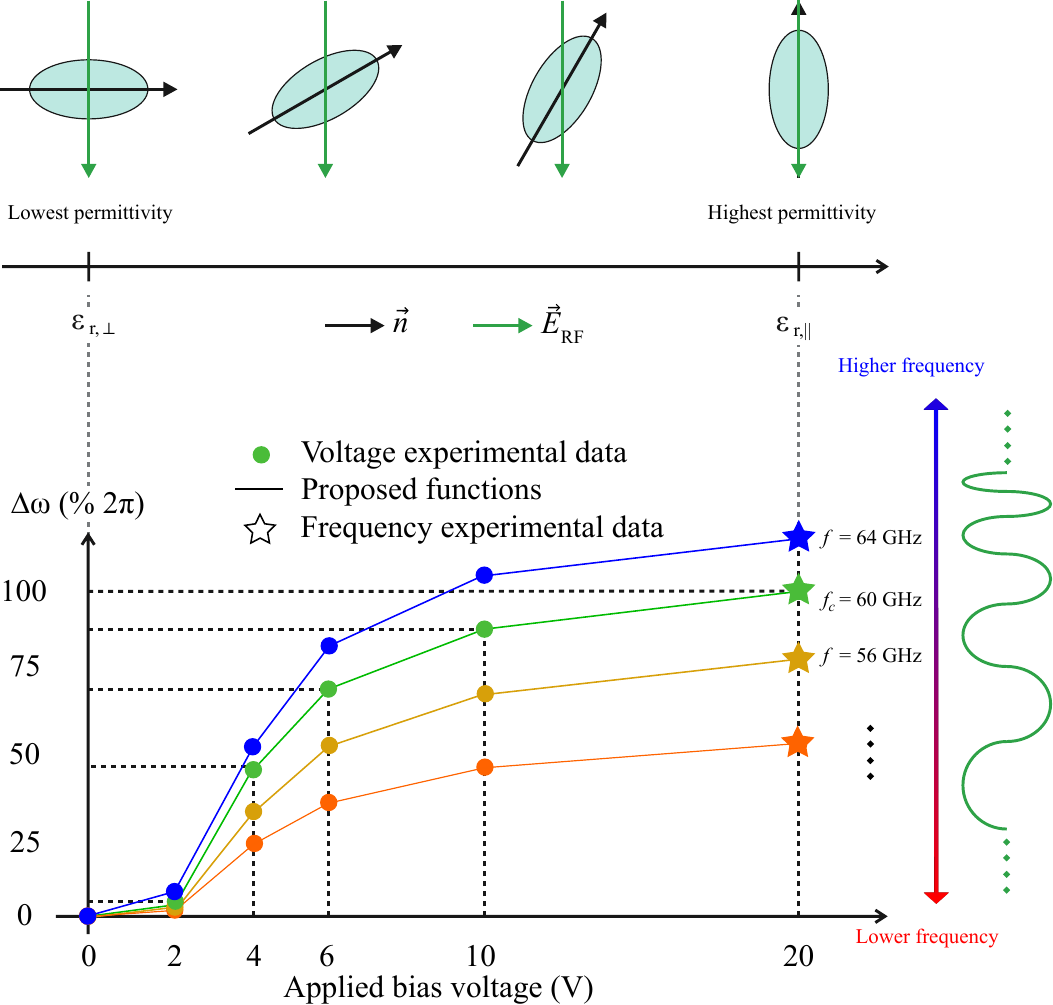}
    \caption{The relationship between phase shift and applied voltage for an \gls{LC} for different frequencies is depicted. The experimental data is given by \cite[Fig.~3.e]{neuder2024architecture}, and a piece-wise linear function is modeled similarly to \cite{delbari2024fast}.}
    \vspace{-5mm}
    \label{fig:V_phase}
\end{figure}

The \gls{LC} unit maximum achievable phase-shift range is determined by several parameters, including the length of the phase shifter, the permittivity differential corresponding to distinct molecular orientations, and the operating frequency. This relationship is expressed as \cite{garbovskiy2012liquid}:
\begin{equation}
    \Delta\omega_{\max}(f)=2\pi l\Delta n_{\max}\frac{1}{c}\left(f_c+\beta\left(f-f_c\right)\right),
    \label{eq:omega epsilon}
\end{equation}
where $l$, $f$, $f_c$, and $c$ denote the phase shifter length, the operating frequency, the center frequency, and the speed of light in vacuum, respectively. The factor $\beta$ controls how the maximum tuning range of the phase shifter varies with frequency. The term $\Delta n_{\max}$ represents the maximum possible birefringence of the \gls{LC} and is defined as
\begin{equation}
    \Delta n_{\max}=\sqrt{\varepsilon_{r,\parallel}}-\sqrt{\varepsilon_{r,\perp}},
    \label{eq:delta n}
\end{equation}
where $\sqrt{\varepsilon_{r,\parallel}}$ and $\sqrt{\varepsilon_{r,\perp}}$ are the maximum and the minimum relative permittivity, respectively. These extreme values are observed when the applied electric field is parallel and perpendicular, respectively, to the director vector $\vec{n}$, as depicted in Fig.~\ref{fig:V_phase}. Adjusting the input voltage modulates the effective birefringence within the range $0 \leq \Delta n \leq \Delta n_{\max}$, thereby enabling precise control over the phase shift at any specific frequency. Assuming that the maximum differential phase shift at the center frequency ($f_c$) is $2\pi$, we can establish a normalized control framework. By setting the minimum reference phase shift to zero (i.e., $\omega_{\min}=0$) and defining $h(v) \defeq 2\pi\frac{\Delta n}{\Delta n_{\max}}$, where $0 \leq h(v) \leq 2\pi, \quad \forall v$, characterizes the empirical voltage-to-phase mapping at a given frequency, we can reformulate \eqref{eq:omega epsilon} as follows:
\begin{equation}
\omega(f)\!=\!2\pi\frac{2\pi l\Delta n\frac{1}{c}\left(f_c+\beta\left(f-f_c\right)\right)}{2\pi l\Delta n_{\max}\frac{f_c}{c}}\!=\!h(v)\!\!\left(\!1\!+\!\beta(\frac{f}{f_c}-1)\!\!\right).
\label{eq:omega reference f}
\end{equation}
Although a closed-form analytical expression for $h(v)$ is unavailable, it can be determined empirically, as illustrated by the green curve in Fig.~\ref{fig:V_phase}. Furthermore, the frequency dependence of the phase shift $\omega$ aligns with the experimental observations reported in \cite[Fig.~3e]{neuder2024architecture}. Specifically, \eqref{eq:omega reference f} demonstrates that the maximum achievable phase shift scales proportionally with frequency variations, a behavior corroborated by both Fig.~\ref{fig:V_phase} and the empirical results in \cite{neuder2024architecture}.

Instead of optimizing the control voltages directly, we define a reference phase shift at the center frequency as $\omega_c\defeq\omega(f_c)$. In Section~\ref{sec: OFDM LC-RIS Phase-shift Design}, the optimization is performed over the variables $[\bomega_c]_n,\,\forall n$. For the remaining subcarriers, the corresponding phase shifts are intrinsically coupled through the following constraint:
\begin{equation}
    \label{eq: omega to omega_c}
    [\bomega(f_k)]_n=[\bomega_c]_n\!\!\left(\!1\!+\!\beta(\frac{f_k}{f_c}-1)\!\!\right)=[\bomega_c]_n\beta_k,\forall n,\forall k,
\end{equation}
where we defined $\beta_k\defeq\left(\!1\!+\!\beta(\frac{f_k}{f_c}-1)\!\right)$ and \eqref{eq: omega to omega_c} ensures that the phase shifts scale consistently across all subcarriers.

\subsection{Secrecy Rate}
\label{sec: Secrecy rate}
The secrecy rate serves as a fundamental metric for evaluating the performance of physical-layer security in communication systems. It is formally defined as the difference between the achievable data rate of the legitimate user and that of the eavesdropper 
\begin{subequations}
    \label{eq: SNR}
\begin{align}
    \RS(f_k)&=[\widetilde{\RS}(f_k)]^+,\\
    \widetilde{\RS}(f_k) &=\log(1\!+\!\SNR_u(f_k))\!-\!\log(1\!+\!\SNR_e(f_k)),\\
    \SNR_u(f_k)&=|(\bh_u^\eff(f_k))^\Herm\bq|^2/\sigma^2_n,\\
    \SNR_e(f_k)&=|(\bh_e^\eff(f_k))^\Herm\bq|^2/\sigma^2_n,
\end{align}
\end{subequations}
where $(\bh_g^\eff)^\Herm=\bh_{d,g}^\Herm + \bh_{r,g}^\Herm \bGamma \bH_t$ for $g \in \{u_1,\cdots,u_U,e\}$ represents the effective end-to-end channel extending from the \gls{BS} to both the legitimate users and the eavesdropper.

In this study, we aim to maximize the secrecy rate ($\RS(f_k)$) across all frequency subcarriers of interest by jointly optimizing the \gls{RIS} phase shifts and the \gls{BS} beamformer. This joint optimization enables the system to construct a favorable propagation environment for legitimate users while simultaneously degrading the signal quality within the eavesdropper's vicinity\footnote{In this work, we do not address the \gls{OFDM} subcarrier allocation process. Instead, our focus is strictly on optimizing the phase shifts to guarantee robust link quality across the entire wideband spectrum (i.e., resolving the secure illumination problem). Given an \gls{RIS} configuration, orthogonal subcarrier assignment can be applied separately on a smaller time scale and hence falls outside the scope of this paper.}. We base the \gls{RIS} phase-shift design on the approximate geographical regions of the target coverage area and the eavesdropper, which offers the following practical advantages:
\begin{itemize}
    \item Diverging from the conventional assumption in the literature that full (instantaneous or statistical) \gls{CSI} of the eavesdropper is available, we only assume that the eavesdropper's location is merely confined to a specified spatial region, $\bp_e\in\Pset_e$. The physical extent of this area, $|\Pset_e|$, intrinsically compensates for potential location estimation uncertainties.
    \item Similarly, we assume the legitimate users reside somewhere within a predefined target zone, $\bp_u\in\Pset_u$. Expanding the size of this area, $|\Pset_u|$, not only mitigates localization errors but also significantly lowers the control overhead by reducing the need for frequent \gls{RIS} reconfigurations \cite{Jamali2022lowtozero}. Consequently, the \gls{RIS} phase shifts are optimized to simultaneously serve legitimate users across all potential locations and subcarriers.
\end{itemize}
Finally, the \gls{RIS} phase-shift design is predominantly driven by the \gls{LOS} paths. At \gls{mmWave} frequencies, the primary deployment spectrum for \gls{LC}-\glspl{RIS}, the \gls{LOS} component dictates the dominant contribution to the overall received signal strength, making it the most critical factor for optimizing secure communication performance. The impact of the \gls{nLOS} paths will be investigated in Section~\ref{sec: Performance Comparison}.

\section{OFDM LC-RIS Phase-shift Design}
\label{sec: OFDM LC-RIS Phase-shift Design}
In this section, we begin by formulating an optimization problem for \gls{RIS} phase-shift design aimed at maximizing the secure rate for all the subcarriers in \gls{OFDM} setup. Considering the setup depicted in Fig. \ref{fig:system model}, the \gls{RIS} aims to maximize the secure rate defined in \eqref{eq: SNR} for the \glspl{MU} under the worst-case scenario. Specifically, the objective is to maximize the secure rate regardless of the exact locations of all the legitimate users and the eavesdropper, as long as they remain within their respective designated areas. This requirement must be satisfied for all subcarriers $f_k$, where $k = 1, \cdots, K$. We formulate the following problem formulation based on the only \gls{LOS} links:
\begin{subequations}
\label{eq:optimization 1}
\begin{align}
    \text {P1:}\quad&~\underset{\alpha,\bomega_c,\bq}{\max}~\alpha \label{eq:optimization 1 a}
    \\&~\text {s.t.} ~~\RS(f_k)\geq \alpha,\, \forall \bp_u\in\Pset_u,\,\forall\bp_e\in\Pset_e,\,\forall k,\label{eq:optimization 1 b}
    \\&\quad\hphantom {\text {s.t.} } 0\leq [\bomega_c]_n < 2\pi, \label{eq:optimization 1 c}
    \\&\quad\hphantom {\text {s.t.} }  [\bomega(f_k)]_n = [\bomega_c]_n\beta_k, \forall n,\,\forall k, \label{eq:optimization 1 d}
    %\\&\quad\hphantom {\text {s.t.} } \sum_{k=1}^K\tr(\bF_k\bF_k^\Herm) \leq P_t.
    \\&\quad\hphantom {\text {s.t.} } \|\bq\|_2^2 \leq P_t. \label{eq:optimization 1 e}
\end{align}
\end{subequations}
In problem P1, (\ref{eq:optimization 1 a}) is the cost function, constraint (\ref{eq:optimization 1 b}) represents the secure rate condition defined in \eqref{eq: SNR}, (\ref{eq:optimization 1 c}) is the realizable phase-shift range for the center frequency, (\ref{eq:optimization 1 d}) enforces the phase shifter of each \gls{RIS} element across subcarriers as defined in \eqref{eq: omega to omega_c}, and (\ref{eq:optimization 1 e}) limits the transmit power. The parameter $\alpha$ characterizes the worst-case secure rate over all possible user and eavesdropper locations as well as all subcarriers, and the objective is to maximize this value. The optimization problem P1 is non-convex, primarily due to the non-convex nature of constraint (\ref{eq:optimization 1 b}). Additionally, the coupling between the variables $\bq$ and $\bomega(f_k),\,\forall k,$ in this constraint further complicates the derivation of a global solution. Before solving Problem P1, we show when beam squinting cannot be ignored as a function of \gls{Tx} (\gls{BS} or \gls{RIS}) size and bandwidth, motivated by the observations in Figs.~\ref{fig: f_N_fixed} and \ref{fig: N_f_fixed} in Section~\ref{Sec: Impact of Number of Antenna in Wideband Scenario}. Next, we decompose the original problem into two sub-problems and iteratively maximize the objective using the \gls{AO} method, as discussed in Section~\ref{sec: Beamformer Design} and Section~\ref{sec: RIS Phase Shifter Design}. Finally, we analyze the overall algorithmic complexity in Section~\ref{sec: Algorithm and Complexity Analysis}.
\subsection{Beam Squinting: The Interplay of Array Size and Bandwidth}
\label{Sec: Impact of Number of Antenna in Wideband Scenario}
Before solving Problem P1, we first study when beam squinting is actually relevant as a function of array size and bandwidth, and then use that insight to simplify the solution. Consider a scenario where an $N$-element \gls{ULA} is deployed at the \gls{Tx} located at the origin, communicating with a single-antenna \gls{Rx} positioned at a fixed distance. Let $\bh_k^\Herm$ represent the channel vector for the $k$-th subcarrier, and let $\bq$ denote the transmit beamforming vector. If the beamformer is designed exclusively for the center frequency $f_c$, it is expressed as $\bq_c = \frac{\bh_c}{|\bh_c|} \sqrt{P_t}$. Consequently, the received \gls{SNR} at subcarrier $k$ is given by $\SNR_k = \frac{|\bh_k^\Herm \bq_c|^2}{\sigma_n^2}$, with $\SNR_c$ denoting the \gls{SNR} at the center frequency, which represents the peak signal quality across the band. 

Fig.~\ref{fig: f_N_fixed} plots the \gls{SNR} of the $k$-th subcarrier normalized against the center frequency \gls{SNR}. Concurrently, Fig.~\ref{fig: N_f_fixed} depicts the worst-case (minimum) normalized \gls{SNR} evaluated across the entire frequency band. Based on these simulation results, we observe the following distinct trends:
\begin{itemize}
    \item For a fixed array size $N$, expanding the operational bandwidth causes the ratio $\frac{\SNR_k}{\SNR_c}$ to decline. The steepness of this degradation (i.e., the slope) is directly dictated by the number of antennas $N$, as illustrated in Fig.~\ref{fig: f_N_fixed}.
    \item Conversely, when the bandwidth is held constant and the number of \gls{Tx} antennas $N$ is increased, the minimum \gls{SNR} ratio across subcarriers, $\underset{k}{\min} \frac{\SNR_k}{\SNR_c}$, also decreases. As shown in Fig.~\ref{fig: N_f_fixed}, this performance penalty remains negligible for smaller values of $N$, but becomes severely detrimental once $N$ exceeds a specific threshold.
\end{itemize}

These observations show that approximating the wideband beamformer using the center frequency response may be valid for many \gls{MIMO} scenarios (\gls{BS} and \glspl{MU}) when the number of elements in one dimension of \gls{Tx} is not large. However, this approximation completely breaks down for extremely large \glspl{RIS} equipped with hundreds of elements\footnote{For a comprehensive mathematical proof of this phenomenon, we refer the reader to \cite{Parvini}.}. To simplify the proposed solution and focus on the large \gls{LC}-\gls{RIS} as the main contribution of this paper, we assume beam squinting is negligible for the \gls{MIMO} \gls{BS} but not for the \gls{LC}-\gls{RIS}. The proposed solution can be straightforwardly extended to massive \gls{MIMO} \gls{BS} as is already studied in \cite{Parvini}.

\begin{figure}[t]
\centering
\begin{subfigure}{0.4\textwidth}
\includegraphics[width=\textwidth,height=0.55\textwidth]{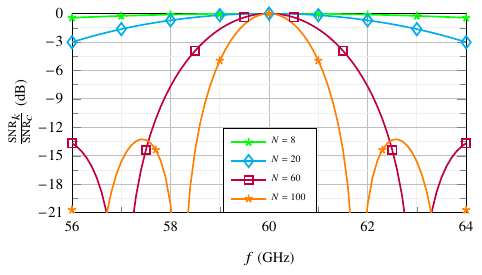}
    %\vspace{-0.5cm}
\caption{The normalized \gls{Rx} power (dB) versus frequency for given $N$ values 8, 20, 60, and 100.}
\label{fig: f_N_fixed}
\end{subfigure}
\begin{subfigure}{0.4\textwidth}
\includegraphics[width=\textwidth,height=0.55\textwidth]{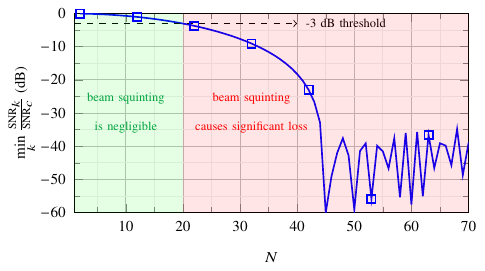}
    %\vspace{-0.5cm}
    \caption{The normalized minimum \gls{Rx} power~(dB) accross all subcarriers versus number of \gls{Tx} antenna for given bandwidth 8 GHz, where $f_c=60$~GHz.}
    \label{fig: N_f_fixed}
\end{subfigure}
\caption{Impact of the bandwidth and number of \gls{Tx} antenna on the \gls{Rx} power.}
\vspace{- 5 mm}
\label{fig: N and f different}
\end{figure}

\subsection{Beamformer Design}
\label{sec: Beamformer Design}
First, we fix $\bomega_c$ and treat $\bq$ as the sole optimization variable. Since our primary focus in this paper is on the \gls{RIS}, we assume that the number of elements in any single dimension of the \gls{BS} array falls within a regime where beam squinting is negligible\footnote{The impact of beam squinting at the \gls{BS} has been extensively investigated in the literature \cite{Parvini}, and the framework presented in this paper can be readily extended to massive \gls{MIMO} scenarios.} (see Fig.~\ref{fig: N_f_fixed}). Therefore, we should only optimize $\bq$ for the center frequency in this section. Furthermore, since the \gls{LOS} path is typically dominant at higher frequencies, we design the beamformer based solely on the \gls{LOS} component. This assumption is generally valid, particularly because both the \gls{BS} and \gls{RIS} are deployed at elevated positions above ground level. Under this assumption, the channel matrix $\bH_t(f_k)$ can be expressed as:
\begin{equation}
\label{eq: H_t}
\bH_t(f_k) = c_0 \ba_\text{RIS}(\bp_\BS,f_k) \underbrace{\ba^\Herm_\BS(\bp_\RIS,f_k)}_{\overset{(a)}{=}\ba^\Herm_\BS(\bp_\RIS,f_c)},
\end{equation}
where equation $(a)$ is justified by the discussion in Section~\ref{Sec: Impact of Number of Antenna in Wideband Scenario}. In addition, $\ba_\RIS(\cdot)$ and $\ba^\Herm_\BS(\cdot)$ denote the steering vectors at the \gls{RIS} and \gls{BS} in the $k$th subcarrier frequency, respectively, and both satisfy $\|\ba_\RIS(\cdot)\| = \|\ba^\Herm_\BS(\cdot)\| = 1$. Given the large \gls{LC}-\gls{RIS}, we adopt the \gls{NF} model for $\ba_\RIS(\cdot)$, as described in \eqref{eq: channel models}. Here, $\bp_\BS$ and $\bp_\RIS$ represent the center positions of the \gls{BS} and \gls{RIS}, respectively, while $c_0$ denotes the path loss coefficient of the \gls{LOS} link. With $\bomega$ fixed, the original problem P1 reduces to the following sub-problem:
\begin{subequations}
\label{eq:optimization 2}
\begin{align}
    \text {P2:}\quad&~\underset{\alpha,\bq}{\max}~\alpha
    \\&~\text {s.t.} ~~\RS(f_k)\geq \alpha,\, \forall \bp_u\in\Pset_u,\,\forall\bp_e\in\Pset_e,\,\forall k,
    \\&\quad\hphantom {\text {s.t.} } \|\bq\|_2^2\leq P_t.
\end{align}
\end{subequations}

The optimal beamformer for this sub-problem is characterized in the following lemma.
\begin{lem}
\label{lemma beamforming}
    Assuming a dominant \gls{LOS} link, blocked direct channels for both the legitimate user and the eavesdropper, and assuming that beam squinting for the \gls{BS} is negligible (green regime in Fig.~\ref{fig: N_f_fixed}), the optimal beamformer for P2 is given by $\bq=\sqrt{P_t}\ba_\BS(\bp_\RIS,f_c)$ regardless of the \glspl{MU}' and \gls{ME}'s channels.
\end{lem}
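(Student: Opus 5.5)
With the direct links blocked ($\bh_{d,g}\approx\bzero_{N_t}$) and the rank-one LOS model \eqref{eq: H_t}, the effective channel factorizes. For every $g\in\{u_1,\cdots,u_U,e\}$ and every subcarrier,
\begin{equation*}
(\bh_g^\eff(f_k))^\Herm\bq = c_0\,\underbrace{\bh_{r,g}^\Herm(f_k)\bGamma(f_k)\ba_\RIS(\bp_\BS,f_k)}_{\defeq\, s_g(f_k)}\;\ba_\BS^\Herm(\bp_\RIS,f_c)\,\bq .
\end{equation*}
Here $s_g(f_k)$ is a scalar that depends only on $\bomega_c$, the location $\bp_g$ and $f_k$, and not on $\bq$. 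The only $\bq$-dependence is through the common scalar $z\defeq\ba_\BS^\Herm(\bp_\RIS,f_c)\bq$, which is the same for all users, the eavesdropper, all locations in $\Pset_u,\Pset_e$, and all $k$. This uses equality $(a)$ in \eqref{eq: H_t}, i.e., negligible beam squinting at the \gls{BS}. Therefore
\begin{equation*}
\SNR_u(f_k)=\frac{|c_0|^2|s_u(f_k)|^2}{\sigma_n^2}\,|z|^2,\qquad \SNR_e(f_k)=\frac{|c_0|^2|s_e(f_k)|^2}{\sigma_n^2}\,|z|^2 .
\end{equation*}
So P2 depends on $\bq$ only through the scalar $t\defeq|z|^2$.

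Next I would show that the secrecy rate is monotone in $t$ whenever it is positive. Write $a=|c_0 s_u|^2/\sigma_n^2$ and $b=|c_0 s_e|^2/\sigma_n^2$. Then
\begin{equation*}
\widetilde{\RS}=\log\frac{1+at}{1+bt},\qquad \frac{\partial}{\partial t}\widetilde{\RS}=\frac{a-b}{(1+at)(1+bt)} .
\end{equation*}
This derivative has the sign of $a-b$, which does not depend on $t$. Two cases follow.
\begin{itemize}
\item If $a>b$, then $\RS$ is strictly increasing in $t$.
\item If $a\le b$, then $\RS=[\widetilde{\RS}]^+=0$ for every $t\ge0$.
\end{itemize}
In both cases, every term $\RS(f_k;\bp_u,\bp_e)$ is nondecreasing in $t$. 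Hence the worst case $\alpha(t)=\min_{k,\bp_u,\bp_e}\RS$ is also nondecreasing in $t$, since a pointwise minimum of nondecreasing functions is nondecreasing. This holds for the whole uncountable family indexed by $\Pset_u\times\Pset_e$, with no assumption on the individual channels.

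It then remains to maximize $t=|\ba_\BS^\Herm(\bp_\RIS,f_c)\bq|^2$ subject to $\|\bq\|^2\le P_t$. By Cauchy--Schwarz and $\|\ba_\BS\|=1$, we get $t\le P_t$. Equality holds for $\bq=\sqrt{P_t}\,\ba_\BS(\bp_\RIS,f_c)$, up to a common phase, which does not affect any SNR. So this choice maximizes $\alpha(t)$ and is optimal for P2, and it does not depend on the \glspl{MU}' or \gls{ME}'s channels $s_g$. This gives the claim of Lemma~\ref{lemma beamforming}.

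The step I expect to need the most care is the sign-invariance in the monotonicity argument. I must make sure that the positive part $[\cdot]^+$ and the cases where $\widetilde{\RS}$ is negative (where larger $t$ would make $\widetilde{\RS}$ more negative) do not break optimality. The observation that the sign of $a-b$ is independent of $t$ handles this: such terms are clipped to zero for every $t$ and never bind in a way that favors a smaller $t$. I would also state explicitly that optimality is weak in the degenerate case $\alpha\equiv0$, where every feasible $\bq$ is optimal, including the proposed one.
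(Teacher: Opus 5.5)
Your proposal is correct and follows the same route as the paper's proof. The paper likewise writes the rate as $\log\frac{1+\zeta_{u,k}(\bp_u)\mu}{1+\zeta_{e,k}(\bp_e)\mu}$ with the common scalar $\mu=|\ba_{\BS}^\Herm(\bp_{\RIS},f_c)\bq|^2$, shows it is increasing in $\mu$ when $\zeta_{u,k}>\zeta_{e,k}$ and gives $\RS=0$ otherwise, and then attains $\mu=P_t$ with the matched beamformer. Your per-term ``nondecreasing in $t$, hence so is the pointwise minimum'' argument is only a tidier packaging of the paper's global case split, and your explicit Cauchy--Schwarz step replaces the paper's citation.
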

\begin{proof}
By inserting $\bH_t$ from \eqref{eq: H_t} into the secure rate expression \eqref{eq: SNR} and ignoring the direct paths $\bh_{d,g},\,g=\{u_1,\cdots,u_U,e\}$ due to blockage, $\RS$ becomes
\begin{equation}
    \!\Big[\!\log\big(\frac{1+\overbrace{|\bh_{r,u}\bGamma c_0\ba_\RIS(\bp_\BS,f_k)|^2/\sigma^2_n}^{\zeta_{u,k}(\bp_u)}\times   \overbrace{|\ba^\Herm_\BS(\bp_\RIS,f_c)\bq|^2}^{\mu}}{1+\underbrace{|\bh_{r,e}\bGamma c_0\ba_\RIS(\bp_\BS,f_k)|^2/\sigma^2_n}_{\zeta_{e,k}(\bp_e)}\times   \underbrace{|\ba^\Herm_\BS(\bp_\RIS,f_c)\bq|^2}_{\mu}}\big)\!\Big]\!^+\!\!\! .
\end{equation}
Here, the terms $\zeta_{u,k}(\bp_u)$ and $\zeta_{e,k}(\bp_e)$ remain constant for a given set of \gls{RIS} phase configurations, while the real scalar $\mu\in\Rset$ can be tuned via $\bq$. Provided that the condition $\zeta_{u,k}(\bp_u)>\zeta_{e,k}(\bp_e)$ is satisfied for all $\bp_u\in\Pset_u, \bp_e\in\Pset_e$, and $k\in\{1,\cdots,K\}$, the secure rate $\RS$ increases monotonically \gls{wrt} $\mu$. We can maximize $\mu$ to $P_t$ by defining $\bq=\sqrt{P_t}\ba_{\BS}(\bp_{\RIS},f_c)$ \cite{tse2005fundamentals}. Conversely, if $\zeta_{u,k}(\bp_u)\leq\zeta_{e,k}(\bp_e)$ for even a single combination of $(\bp_u,\bp_e,k)$ within their respective sets, the secure rate drops to $\RS=0$ regardless of the value of $\mu$. Consequently, we can set $\bq=\sqrt{P_t}\ba_{\BS}(\bp_{\RIS},f_c)$ without loss of generality, thereby completing the proof.
%where $\zeta_{u,k}(\bp_u)$ and $\zeta_{e,k}(\bp_e)$ are fixed due to the given \gls{RIS} phase shifts and $\mu\in\Rset$ is a controllable scalar in terms of $\bq$. When $\zeta_{u,k}(\bp_u)>\zeta_{e,k}(\bp_e),\,\forall \bp_u\in\Pset_u, \forall \bp_e\in\Pset_e$, $\forall k\in\{1,\cdots,K\}$, $\RS$ is monotonically increasing in $\mu$. The maximum $\mu$ is $P_t$ when we set $\bq=\sqrt{P_t}\ba_{\BS}(\bp_{\RIS},f_c)$ \cite{tse2005fundamentals}. Otherwise, when there exists at least one pair $(\bp_u,\bp_e,k)$, $\bp_u\in\Pset_u$, $\bp_e\in\Pset_e$, and $k\in\{1,\cdots,K\}$ such that $\zeta_{u,k}(\bp_u)\leq\zeta_{e,k}(\bp_e)$, then we have $\RS=0$ independent of $\mu$. Therefore, without loss of generality, we can choose $\bq=\sqrt{P_t}\ba_{\BS}(\bp_{\RIS},f_c)$ which concludes the proof.
\end{proof}
Based on the Lemma~\ref{lemma beamforming}, we fix $\bq=\sqrt{P_t}\ba_{\BS}(\bp_{\RIS},f_c)$ and in the next section, we focus on the \gls{RIS} phase shift design.

\subsection{RIS Phase Shifter Design}
\label{sec: RIS Phase Shifter Design}
Assuming a fixed beamformer, we aim to maximize the secrecy rate by optimizing the \gls{RIS} phase shifts. For large \glspl{RIS}, see Fig.~\ref{fig: N_f_fixed}, it is essential to carefully design the \gls{LC}-\gls{RIS} phase shifts across all subcarriers simultaneously. The resulting subproblem from Problem P2, while fixing \gls{BS} beamformer to configure the \gls{RIS} phase shifts, can be formulated as:
\begin{subequations}
\label{eq:optimization 3}
\begin{align}
    \text {P3:}\quad&~\underset{\alpha,\bomega_c}{\max}~\alpha
    \\&~\text {s.t.} ~~\RS(f_k)\geq \alpha,\, \forall \bp_u\in\Pset_u,\,\forall\bp_e\in\Pset_e,\,\forall k,
    \\&\quad\hphantom {\text {s.t.} } 0\leq [\bomega_c]_n < 2\pi,
    \\&\quad\hphantom {\text {s.t.} }  [\bomega(f_k)]_n = [\bomega_c]_n\beta_k, \forall n,\,\forall k.\label{eq:optimization 3 d}
\end{align}
\end{subequations}
Problem P3 is inherently non-convex due to the non-convexity of constraint (\ref{eq:optimization 3}b) with respect to $[\bomega(f_k)]_n$ for all $n$ and $k$. Without loss of generality, we omit the $[\cdot]^+$ operation from $\RS$ in \eqref{eq: SNR}, and proceed to maximize the secrecy rate using P3 by considering $\widetilde{\RS}$ instead of $\RS$, see \eqref{eq: SNR}. This simplification does not affect the solution: if the resulting $\RS > 0$, the omission is fine; otherwise, if $\widetilde{\RS} < 0$, the secrecy rate becomes zero regardless, i.e., $\RS=0$. To reformulate the constraint, we introduce a new variable $\gamma$ such that $\log(\gamma) = \alpha$. Using this substitution, constraint (\ref{eq:optimization 3}b) can be rewritten as:
\begin{equation}
\label{eq: gamma definition}
    \frac{1+\SNR_u(f_k)}{1+\SNR_e(f_k)}\geq\gamma\Rightarrow (\SNR_u(f_k)-\gamma\SNR_e(f_k))\geq\gamma-1.
\end{equation}
 Let us decompose each $\SNR_u(f_k)$ and $\SNR_e(f_k)$ in terms of a vector including exponential of \gls{RIS} phase shifts
 \begin{equation}
 \label{eq: phase shift vector}
     \bs_k\defeq[\e^{\jj[\bomega(f_k)]_1}, \cdots, \e^{\jj[\bomega(f_k)]_N}]^\Trans,\forall k.
 \end{equation}
 With this assumption, we have
\begin{align}
    \label{eq: SNR in term of s}
        &\SNR_u(f_k)=\bs_k^\Herm\bA^u_k\bs_k,\quad
        \SNR_e(f_k)=\bs_k^\Herm\bA^e_k\bs_k,\\
        \label{eq: Matrix A}
        &\bA^g_k=\frac{\diag(\bh_{r,g}(f_k)^\Herm)\bH_t(f_k)\bq\bq^\Herm\bH_t(f_k)^\Herm\diag(\bh_{r,g}(f_k))}{\sigma_n^2},
\end{align}
where $g=\{u_1,\cdots,u_U,e\}$. To solve the formulated Problem P3, we introduce two distinct approaches, each presenting unique trade-offs. The first is an \gls{SDP}-based method, which delivers excellent accuracy at the expense of high computational complexity. The second is a low-complexity method; although its performance is marginally lower than that of the \gls{SDP}-based approach, it is highly scalable and particularly well-suited for extremely large \glspl{RIS}.

\subsubsection{SDP-based method}
\label{sec: Optimization-based method}
To tackle the non-convexity of the problem in the first method, we transform P3 into a \gls{SDP} problem. Let us define $\bS_k\defeq\bs_k\bs_k^\Herm,\,\forall k$, and $\bA_{k,u,e}(\gamma)\defeq\bA_k^u(\bp_u)-\gamma\bA_k^e(\bp_e)$ where $\bA_{k,u,e}(\gamma)$ is a function of $\bp_u,\,\bp_e,\,f_k$ and $\gamma$, but we drop $\bp_u$, $\bp_e$ for notational simplicity. By considering $\bS_c$ as the matrix associated to the center frequency $f_c$, (\ref{eq:optimization 3 d}) changes to $\bS_k=\bS_c^{\circ\beta_k}, \forall k$, where the notation $(\cdot)^{\circ\beta_k}$ denotes the element-wise power (Hadamard power). After applying these reformulations in P3, and because the logarithm function is increasing monotonically, the problem P3 can be changed to P4 in the following:
 \begin{subequations}
\label{eq:optimization 4}
\begin{align}
    \text{P4:}&~\underset{\gamma,\bS_c}{\max}~\gamma
    \\&~\text {s.t.} ~~\text{C1: }\tr(\bA_{k,u,e}(\gamma)\bS_c^{\circ\beta_k})\!\geq\!\gamma-1, \forall (\bp_u,\bp_e)\!\in\!\Pset_u\!\times\!\Pset_e,\,\forall k,
    \\&\quad\hphantom {\text {s.t.} }\text{C2: } \bS_k\succeq 0,\,\forall k,
    \\&\quad\hphantom {\text {s.t.} }\text{C3: } \rank(\bS_k)=1,\,\forall k, \\&\quad\hphantom {\text {s.t.} }\text{C4: }\diag(\bS_k)=\bone_N,\,\forall k.
\end{align}
\end{subequations}
This problem is still non-convex due to the non-convexity in
C1 and C3 in $\bS_k,\forall k$, and being coupled $\gamma$ and $\bS_k,\forall k$ in C1. However, before the resolving these non-convexities, first, we introduce Lemma~\ref{lemma S for all k} to simplify P4 and then solve it.

\begin{lem}
\label{lemma S for all k}
   If constraints C2, C3, and C4 hold for the center frequency, then they hold for all $f_k,\, \forall k$.
\end{lem}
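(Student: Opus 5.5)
The plan is to use the fact that C2, C3 and C4 together at the center frequency are equivalent to $\bS_c=\bs_c\bs_c^\Herm$ for a unit-modulus vector $\bs_c$. We then show that the Hadamard power $\bS_c^{\circ\beta_k}$ is again an outer product of a unit-modulus vector.

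\textbf{Step 1: factorize $\bS_c$.} Since $\bS_c\succeq 0$ with $\rank(\bS_c)=1$, we can write $\bS_c=\bs_c\bs_c^\Herm$ for some $\bs_c\in\Cset^N$. Constraint C4 gives $|[\bs_c]_n|^2=1$ for all $n$, so $[\bs_c]_n=\e^{\jj[\bomega_c]_n}$ with $[\bomega_c]_n\in[0,2\pi)$. The factor is unique up to a common phase. We fix that phase so the representative phases lie in $[0,2\pi)$, consistent with (\ref{eq:optimization 1 c}).

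\textbf{Step 2: compute the Hadamard power entrywise.} We have $[\bS_c]_{m,n}=\e^{\jj([\bomega_c]_m-[\bomega_c]_n)}$, and we interpret the Hadamard power through the phase map (\ref{eq: omega to omega_c}). This gives
\begin{equation*}
[\bS_k]_{m,n}=\e^{\jj\beta_k([\bomega_c]_m-[\bomega_c]_n)}=\e^{\jj[\bomega(f_k)]_m}\,\e^{-\jj[\bomega(f_k)]_n}=[\bs_k\bs_k^\Herm]_{m,n},
\end{equation*}
with $\bs_k$ as in (\ref{eq: phase shift vector}). Hence $\bS_k=\bs_k\bs_k^\Herm$. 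This rank-one outer product is immediately positive semidefinite (C2) and has rank one (C3), since $\bs_k\neq\bzero$. Its diagonal entries are $|\e^{\jj[\bomega(f_k)]_n}|^2=1$, which gives C4.

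\textbf{Main obstacle.} The delicate point is that a non-integer power of a complex number is multivalued. The identity $(\e^{\jj x})^{\beta_k}=\e^{\jj\beta_k x}$ depends on the branch, so $\bS_c^{\circ\beta_k}$ is only well defined once a phase convention is fixed. If one used the principal argument of each entry $[\bS_c]_{m,n}$ separately, the exponent $\beta_k([\bomega_c]_m-[\bomega_c]_n)$ could be replaced by one differing by $2\pi\beta_k$. The resulting matrix would then generally not factor as an outer product.

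The way around this is to define the Hadamard power through the underlying phases, $\bS_c^{\circ\beta_k}\defeq\bs_c^{\circ\beta_k}(\bs_c^{\circ\beta_k})^\Herm$ with $[\bs_c^{\circ\beta_k}]_n=\e^{\jj\beta_k[\bomega_c]_n}$. This is exactly what constraint (\ref{eq:optimization 3 d}) encodes physically. It also makes the entrywise computation in Step 2 exact, so the proof reduces to the factorization argument above.
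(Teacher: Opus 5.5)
Your proof is correct and follows the paper's route. The paper also factors $\bS_c=\bs_c\bs_c^\Herm$ and passes the Hadamard power through the outer product, $\bS_c^{\circ\beta_k}=\bs_c^{\circ\beta_k}(\bs_c^{\circ\beta_k})^\Herm$ (its step (a), asserted only as ``valid for rank one matrix''). It then checks positive semidefiniteness via a separate eigendecomposition and the diagonal via $1^{\beta_k}=1$, whereas you read off all three properties from the unit-modulus outer product at once.

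Your phase-based convention for the non-integer power is exactly what the paper's step (a) tacitly needs. Strictly, it makes $\bS_k$ a function of the chosen phases $\bomega_c$ rather than of $\bS_c$ alone: a common phase shift of $\bs_c$ changes which phases wrap modulo $2\pi$, so your remark about fixing that phase does not pin it down. The lemma is unaffected, since every choice gives a rank-one, positive semidefinite, unit-diagonal $\bS_k$.
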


\begin{proof}
Recall $\bS_k=\bS_c^{\circ\beta_k}, \forall k$, we begin with the rank-one constraint in C3. If $\bS_c$ is rank one, it can be written as $\bS_c = \bs_c \bs_c^\Herm$. Then, we have:
\[
\bS_k = \bS_c^{\circ \beta_k} = (\bs_c \bs_c^\Herm)^{\circ \beta_k} \overset{(a)}{=} \bs_c^{\circ \beta_k} (\bs_c^{\circ \beta_k})^\Herm,
\]
where $(a)$ is valid for rank one matrix and the result shows $\bS_k,\,\forall k$, is also rank one. Next, we verify that $\bS_k$ is positive semidefinite (C2), assuming $\bS_c \succeq 0$. Since $\bS_c \succeq 0$ and $\rank(\bS_c)=1$, we can rewrite $\bS_c$ as $\bS_c = e_c\bx_c\bx_c^\Herm,$ where $e_c>0$ is the only non-zero eigenvalue of $\bS_c$ and $\bx_c$ is corresponding eigenvector. Then:
\[
\bS_k = \bS_c^{\circ \beta_k} = (e_c\bx_c\bx_c^\Herm)^{\circ \beta_k} = e_c^{\beta_k}\bx_c^{\circ \beta_k}(\bx_c^\Herm)^{\circ \beta_k},
\]
where $e_c^{\beta_k}>0$ and confirms that $\bS_k$ is also positive semidefinite. Finally, consider the diagonal constraint in C4. If $\diag(\bS_c) = \bI_N$, then $[\bS_c]_{i,i} = 1,\,\forall i$, and thus $[\bS_c]_{i,i}^{\beta_k} = 1^{\beta_k}=1,\,\forall k$. Hence, $\diag(\bS_k) = \bI_N,\, \forall k$. This completes the proof.
\end{proof}

Using Lemma~\ref{lemma S for all k}, we can rewrite problem P4 as follows:
 \begin{equation}
\label{eq:optimization 5}
\begin{aligned}
    \text {P5:}&~\underset{\gamma,\bS_c}{\max}~\gamma
    \\&~\text {s.t.} ~~\text{C1},\, \widehat{\text{C2}}: \bS_c\succeq 0,\,
    \\&\quad\hphantom {\text {s.t.} } \widehat{\text{C3}}: \rank(\bS_c)=1,\,  \widehat{\text{C4}}: \diag(\bS_c)=\bone_N.
\end{aligned}
\end{equation}
Note that the impact of the frequency dependent component is now isolated in C1. This problem is still non-convex due to the non-convexity in C1 and $\widehat{\text{C3}}$ in $\bS_c$, and being coupled $\gamma$ and $\bS_c$ in C1. We resolve the non-convexity of each one in the following.
\paragraph{Rank one constraint in \texorpdfstring{$\widehat{\text{C3}}$}{C3 hat}} \mbox{}
 To tackle this issue, we adopt the penalty method exploited in \cite{delbari2024far}. The basic idea is to replace the rank constraint with inequality $\|\bS_c\|_*-\|\bS_c\|_2 \leq 0$, which holds only if $\bS_c$ has rank smaller or equal to one. While the new constraint is still non-convex, one can apply the first-order Taylor approximation to make it convex. Let $\bS_c^{(i)}$ denotes the value of matrix $\bS_c$ in the $i$th iteration. According to the first-order Taylor approximation:
\begin{equation}
\label{eq: taylor approximation}
    \|\bS_c\|_2\geq\|\bS_c^{(i)}\|_2+\tr\big(\blambda_{\max}(\bS_c^{(i)})\blambda_{\max}^\Herm(\bS_c^{(i)})(\bS_c-\bS_c^{(i)})\big).
\end{equation}
By adopting the penalty method and applying \eqref{eq: taylor approximation} into the cost function of P5, we have
\begin{subequations}
\label{eq:optimization 6}
\begin{align}
    \text {P6:}\quad&~\underset{\gamma,\bS_c}{\max}~\gamma-\eta^{(i)}\Big(\|\bS_c\|_*-\|\bS_c^{(i)}\|_2-\tr\big(\blambda_{\max}(\bS_c^{(i)}),\nonumber
    \\&\quad\quad\quad\times\blambda_{\max}^\Herm(\bS_c^{(i)})(\bS_c-\bS_c^{(i)})\big)\Big)
    \\&~\text {s.t.} ~~\text{C1}, \widehat{\text{C2}}, \widehat{\text{C4}}.
\end{align}
\end{subequations}
Here, $\eta^{(i)}$ is the penalty factor at iteration $i$, which increases gradually. By selecting a sufficiently large $\eta$, problems P5 and P6 become equivalent. In the following, we will address the non-convexity of C1 in $\bS_c$.

\paragraph{Hadamard power constraint on \texorpdfstring{$\bS_c$}{Sc} in C1} \mbox{}
 To tackle with this issue, we substitute $\bS_c^{\circ\beta_k}$ with its first order Taylor approximation\footnote{Higher-order Taylor approximations are also applicable; however, they increase computational complexity. We choose the first-order approximation to maintain simplicity.} as follows:
\begin{equation}
    \bS_c^{\circ\beta_k}\approx{\left(\bS_c^{(i)}\right)}^{\circ(\beta_k)}+\left(\beta_k{\left(\bS_c^{(i)}\right)}^{\circ(\beta_k-1)}\right)\odot(\bS_c-\bS_c^{(i)}),\,\forall k,
\end{equation}
where $i$ denotes $i$th iteration. By substituting this approximation in C1, we can define $\widehat{\text{C1}}$ as follows:
\begin{align}
&\tr\!\left(\!\!\bA_{k,u,e}(\gamma)\left(\!\!{\left(\bS_c^{(i)}\right)}^{\circ(\beta_k)}\!\!\!\!+\!\!\left(\beta_k{\left(\bS_c^{(i)}\right)}^{\circ(\beta_k-1)}\right)\odot\left(\bS_c-\bS_c^{(i)}\right)\!\!\right)\!\!\right)\nonumber\\
&\geq\gamma-1, \forall (\bp_u,\bp_e)\in\Pset_u\times\Pset_e,\,\forall k.
\end{align}
By substituting $\widehat{\text{C1}}$ instead of C1 in P6, we have:
\begin{subequations}
\label{eq:optimization 7}
\begin{align}
    \text {P7:}\quad&~\underset{\gamma,\bS_c}{\max}~\gamma-\eta^{(i)}\Big(\|\bS_c\|_*-\|\bS_c^{(i)}\|_2-\tr\big(\blambda_{\max}(\bS_c^{(i)}),\nonumber
    \\&\quad\quad\quad\times\blambda_{\max}^\Herm(\bS_c^{(i)})(\bS_c-\bS_c^{(i)})\big)\Big)
    \\&~\text {s.t.} ~~\widehat{\text{C1}}, \widehat{\text{C2}}, \widehat{\text{C4}}.
\end{align}
\end{subequations}
\paragraph{\texorpdfstring{Coupled $\gamma$ and $\bS_c$ in $\widehat{\text{C1}}$}{Coupled gamma and Sc in C1-hat}} \mbox{}
To address the coupling of $\gamma$ and $\bS_c$ in $\widehat{\text{C1}}$, we employ \gls{AO}, where one variable is fixed while the other is optimized. On one hand, when $\gamma$ is fixed, the optimization problem P7 becomes convex in the
matrix $\bS_c$ because the objective function is concave, and the
constraints define a convex set. Therefore, it can be efficiently
solved using standard convex optimization solvers such as
CVX \cite{cvx}. On the other hand, when $\bS_c$ is fixed, The problem P7 is linear in terms of $\gamma$ and its closed-form solution is given
by:
\begin{equation}
\label{eq: best gamma}
    \gamma=\underset{\forall \bp_u\in\Pset_u,\,\forall\bp_e\in\Pset_e,\,\forall k}{\min}~\frac{\tr(\bA_k^u(\bp_u)\bS_k)+1}{\tr(\bA_k^e(\bp_e)\bS_k)+1}.
\end{equation}

The proposed \gls{SDP}-based algorithm is summarized in Algorithm \ref{alg:cap}. It consists of two loops; the inner loop finds a rank-one solution to Problem P7, and the outer loop applies successive convex approximation (SCA).

\begin{algorithm}[t]
\caption{Proposed \gls{SDP}-based Algorithm for \gls{LC}-\gls{RIS} phase-shift design}\label{alg:cap}
\begin{algorithmic}[1]
\STATE \textbf{Initialize:} $\bs_c^{(0)}=\e^{\jj2\pi\times\mathrm{rand}(N)},\bS_c^{(0)}=\bs_c^{(0)}{\bs_c^{(0)}}^\Herm$, $\gamma^{(0)}$.
\FOR{$j=1, \cdots, J_{\tmax}$}
\STATE Calculate $\bA_{k,u,e}(\gamma^{(j-1)}),\forall (\bp_u,\bp_e)\in\Pset_u\times\Pset_e,\,\forall k$.
    %\WHILE{$\|\bS_c^{(i)}-\bS_c^{(i-1)}\|_F^2\geq\epsilon_1$ and $i\leq I_{\tmax}$}
    \FOR{$i=1, \cdots, I_{\tmax}$}
    \STATE Solve convex P7 for given $\bS_c^{(i-1)}$, and store the intermediate solution $\bS_c^{(i)}$. Then update $\eta^{(i)} =5\eta^{(i-1)}$.
    \ENDFOR
    \STATE Set $\bS_k=\left(\bS_c^{(I_{\max})}\right)^{\circ(\beta_k)},\,\forall k,$ and $\bS_c^{(0)}=\bS_c^{(I_{\max})}$.
    \STATE Calculate $\gamma^{(j)}$ from \eqref{eq: best gamma}.
    
    \ENDFOR
\end{algorithmic}
\end{algorithm}

\subsubsection{Low-complexity method}
\label{sec: Analytical method}
The \gls{SDP}-based algorithm descussed in Section \ref{sec: Optimization-based method} is computationally complex, see Section~\ref{sec: Algorithm and Complexity Analysis} for a complexity analysis. Therefore, in this section, we design a low-complexity scheme. To tackle the non-convexity of Problem~P3 established by the definitions in \eqref{eq: gamma definition}, \eqref{eq: phase shift vector}, and \eqref{eq: SNR in term of s}, we formulate the following problem:
 \begin{subequations}
\label{eq:optimization 8}
\begin{align}
    \text{P8:}&~\underset{\gamma,\bs_c}{\max}~\gamma
    \\&~\text {s.t.} ~~\text{C1: }\frac{1+\SNR_e(f_k)}{1+\SNR_u(f_k)}\geq\gamma, \forall (\bp_u,\bp_e)\!\in\!\Pset_u\!\times\!\Pset_e,\,\forall k,
    \\&\quad\hphantom {\text {s.t.} }\text{C2: } \bs_k=\bs_c^{\circ\beta_k},\,\forall k,
    \\&\quad\hphantom {\text {s.t.} }\text{C3: } |[\bs_c]_n|=1,\,\forall n.
\end{align}
\end{subequations}
We optimize $\gamma$ and $\bs_c$ iteratively. Even for a fixed $\gamma$, this problem remains non-convex due to the unit-modulus constraint C3. To render the problem tractable, we introduce and maximize a surrogate lower bound \cite{Shen2019}. Let us define 
\begin{align}
\label{eq: calculation matrix A}
    \bA_{k,u,e}(\gamma)&\defeq\bA^u_k(\bp_u)-\gamma\bA^e_k(\bp_e),\\
    \bPhi_{k,u,e}(\gamma)&\defeq\bA_{k,u,e}(\gamma)-\lambda_{\min}(\bA_{k,u,e}(\gamma))\bI_N,
    \label{eq: calculation matrix Phi}
\end{align}
where $\lambda_{\min}(\cdot)$ denotes the minimum eigenvalue of a matrix. This eigenvalue is inherently negative, stemming from the $-\gamma\bA^e_k(\bp_e)$ term in \eqref{eq: calculation matrix A}. Based on these definitions, we can reformulate constraint C1 given by $\bs_k^\Herm\bA_{k,u,e}(\gamma)\bs_k-\gamma+1\geq0$ into a lower bound using the following lemma.
\begin{lem}
\label{lem: lower bound}
    A valid lower bound for $\bs_k^\Herm\bA_{k,u,e}(\gamma)\bs_k$ is
    \begin{equation}
        \lambda_{\min}(\bA_{k,u,e}(\gamma)) N+2\real\{\bs_k^\Herm\bbeta_{k,u,e}(\gamma)\}-\tilde{\bs}_k^\Herm\bPhi_{k,u,e}(\gamma)\tilde{\bs}_k,
    \end{equation}
    where
    \begin{equation}
        \label{eq: beta calculation}
        \bbeta_{k,u,e}(\gamma)\defeq\bPhi_{k,u,e}(\gamma)\tilde{\bs}_k.
    \end{equation}
 Here, $\gamma$ and the constant vector $\tilde{\bs}_k$ are fixed, and equality is achievable by $\bs_k=\tilde{\bs}_k,\,\forall k$.
\end{lem}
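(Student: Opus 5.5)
This is the standard minorization argument for a quadratic form with unit-modulus variables. I would shift the matrix so that it becomes positive semidefinite, apply the first-order (tangent) bound to the resulting convex quadratic, and then undo the shift using $\|\bs_k\|^2=N$.

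\emph{Step 1 (split off the minimum eigenvalue).} By the definition of $\bPhi_{k,u,e}(\gamma)$ in \eqref{eq: calculation matrix Phi},
\begin{equation*}
\bs_k^\Herm\bA_{k,u,e}(\gamma)\bs_k=\lambda_{\min}(\bA_{k,u,e}(\gamma))\,\bs_k^\Herm\bs_k+\bs_k^\Herm\bPhi_{k,u,e}(\gamma)\bs_k .
\end{equation*}
Each entry of $\bs_k$ has unit modulus: constraints C2 and C3 of P8 give $|[\bs_c]_n|^{\beta_k}=1$. Hence $\bs_k^\Herm\bs_k=N$, and the first term equals exactly $\lambda_{\min}(\bA_{k,u,e}(\gamma))N$.

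\emph{Step 2 (positive semidefiniteness).} The matrix $\bA_{k,u,e}(\gamma)$ is Hermitian, since both $\bA_k^u$ and $\bA_k^e$ in \eqref{eq: Matrix A} are Hermitian and $\gamma$ is real. Subtracting $\lambda_{\min}$ times the identity shifts every eigenvalue to a nonnegative value, so $\bPhi_{k,u,e}(\gamma)\succeq0$. Note that no sign assumption on $\lambda_{\min}$ is needed here.

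\emph{Step 3 (tangent bound).} For any PSD $\bPhi$ and any fixed $\tilde{\bs}_k$, expand
\begin{equation*}
0\le(\bs_k-\tilde{\bs}_k)^\Herm\bPhi(\bs_k-\tilde{\bs}_k)=\bs_k^\Herm\bPhi\bs_k-2\real\{\bs_k^\Herm\bPhi\tilde{\bs}_k\}+\tilde{\bs}_k^\Herm\bPhi\tilde{\bs}_k .
\end{equation*}
The cross-term identity $\bs_k^\Herm\bPhi\tilde{\bs}_k+\tilde{\bs}_k^\Herm\bPhi\bs_k=2\real\{\bs_k^\Herm\bPhi\tilde{\bs}_k\}$ uses Hermitian symmetry of $\bPhi$. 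Rearranging gives
\begin{equation*}
\bs_k^\Herm\bPhi\bs_k\ge2\real\{\bs_k^\Herm\bbeta_{k,u,e}(\gamma)\}-\tilde{\bs}_k^\Herm\bPhi\tilde{\bs}_k,
\end{equation*}
with $\bbeta_{k,u,e}(\gamma)$ as in \eqref{eq: beta calculation}. Substituting this into Step 1 yields the stated lower bound.

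\emph{Step 4 (tightness).} Equality holds exactly when the nonnegative term $(\bs_k-\tilde{\bs}_k)^\Herm\bPhi(\bs_k-\tilde{\bs}_k)$ vanishes, and this is the case in particular for $\bs_k=\tilde{\bs}_k$. For Step 1 to remain exact at that point, $\tilde{\bs}_k$ must also have unit-modulus entries. I would therefore state this requirement explicitly: take $\tilde{\bs}_k$ to be the previous iterate, $\tilde{\bs}_k=(\tilde{\bs}_c)^{\circ\beta_k}$ with $|[\tilde{\bs}_c]_n|=1$.

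The only delicate point is the use of $\|\bs_k\|^2=N$, which relies on the frequency-coupling constraint preserving unit modulus. Everything else is routine algebra.
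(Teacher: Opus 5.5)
Your proof is correct and is the same argument as the paper's: both expand $(\bs_k-\tilde{\bs}_k)^\Herm\bPhi_{k,u,e}(\gamma)(\bs_k-\tilde{\bs}_k)\geq0$ for the positive semidefinite shifted matrix, then rearrange using $\|\bs_k\|^2=N$. You also make explicit three points the paper leaves implicit: the Hermitian symmetry of $\bA_{k,u,e}(\gamma)$, that the Hadamard power preserves unit modulus, and that $\tilde{\bs}_k$ must have unit-modulus entries for equality to hold.
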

\begin{proof}
Because $\lambda_{\min}(\bA_{k,u,e}(\gamma))$ represents the minimum eigenvalue, subtracting it multiplied by the identity matrix in \eqref{eq: calculation matrix Phi} ensures that all eigenvalues of $\bPhi_{k,u,e}(\gamma)$ are non-negative. Consequently, $\bPhi_{k,u,e}(\gamma)$ is a positive semi-definite matrix, which implies that $(\bs_k-\tilde{\bs}_k)^\Herm\bPhi_{k,u,e}(\gamma)(\bs_k-\tilde{\bs}_k)\geq0$. By expanding this inequality and substituting the definition of $\bPhi_{k,u,e}(\gamma)$, we obtain $\bs_k^\Herm\bA_{k,u,e}(\gamma)\bs_k-\underbrace{\bs_k^\Herm\lambda_{\min}(\bA_{k,u,e}(\gamma))\bI_N\bs_k}_{\lambda_{\min}(\bA_{k,u,e}(\gamma))\|\bs_k\|^2}-2\real\{\bs_k^\Herm\bbeta_{k,u,e}(\gamma)\}+\tilde{\bs}_k^\Herm\bPhi_{k,u,e}(\gamma)\tilde{\bs}_k\geq0$. Rearranging these terms yields the proposed lower bound, concluding the proof.
\end{proof}
With the help of Lemma~\ref{lem: lower bound}, we can reformulate Problem~P8 by maximizing its surrogate function as follows:
\begin{subequations}
\label{eq:optimization 9}
\begin{align}
    \text{P9:}&~\underset{\bs_c}{\max}~\underset{k,\bp_u,\bp_e}{\min}~\real\{\bs_k^\Herm\bbeta_{k,u,e}(\gamma)\}
    \\&\quad\hphantom {\text {s.t.} }\text{C1: } \bs_k=\bs_c^{\circ\beta_k},\,\forall k,
    \\&\quad\hphantom {\text {s.t.} }\text{C2: } |[\bs_c]_n|=1,\,\forall n.
\end{align}
\end{subequations}
Problem~P9 can be solved iteratively. A naive approach would be to fix the solution in previous step $\tilde{\bs}_k$ at each iteration and optimize $\bs_k$ exclusively for the specific subcarrier and location that yields the minimum secrecy rate. However, this hard-minimum approach is highly susceptible to the ping-pong effect, where the algorithm continuously oscillates between a finite set of active constraints and fails to converge. To address this issue, rather than directly maximizing the non-smooth worst-case \gls{RS} in Problem~P9, we maximize its \gls{LSE} approximation. This approach provides a smooth and differentiable surrogate objective.
\begin{lem}
    \label{lem: LSE}
    A valid lower bound for $\underset{n}{\min}~x_n$ is given by
    \begin{equation}
        \label{eq: lem LSE}
        \underset{n}{\min}~x_n \geq -\mu\log\left(\sum_n \exp\left(-\frac{x_n}{\mu}\right)\right),
    \end{equation}
    where $\mu > 0$ is a smoothing parameter. This approximation becomes exact (i.e., equality is achieved) as $\mu\to0$.
\end{lem}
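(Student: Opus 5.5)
The plan is to prove the lemma by isolating the smallest element of the finite collection and comparing the exponential sum against its dominant term. Let $x_{\min}\defeq\min_n x_n$, attained at some index $n^\star$, and let $M$ be the number of terms. Every term $\exp(-x_n/\mu)$ is positive, so keeping only the $n^\star$ term gives
\begin{equation*}
\sum_n \exp\left(-\frac{x_n}{\mu}\right)\;\geq\;\exp\left(-\frac{x_{\min}}{\mu}\right).
\end{equation*}
Since $\log$ is increasing and $-\mu<0$, taking logarithms and multiplying by $-\mu$ reverses the inequality. This yields $-\mu\log\big(\sum_n \exp(-x_n/\mu)\big)\leq x_{\min}$, which is exactly \eqref{eq: lem LSE}.

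For the limiting statement, I will also bound the sum from the other side. Since $x_n\geq x_{\min}$ for all $n$, each term satisfies $\exp(-x_n/\mu)\leq \exp(-x_{\min}/\mu)$, so the sum is at most $M\exp(-x_{\min}/\mu)$. The same log-and-multiply step then gives
\begin{equation*}
x_{\min}-\mu\log M\;\leq\;-\mu\log\left(\sum_n \exp\left(-\frac{x_n}{\mu}\right)\right)\;\leq\;x_{\min}.
\end{equation*}
As $\mu\to0^+$, the gap $\mu\log M$ vanishes, so by squeezing the bound converges to $\min_n x_n$. This is the sense in which the approximation becomes exact.

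The only delicate point is that $M$ must be finite, or at least that the gap $\mu\log M$ must vanish. In the paper's use the index runs over $(k,\bp_u,\bp_e)$, and $\Pset_u$ and $\Pset_e$ are continuous regions. I will therefore state explicitly that they are replaced by finite sample grids, as is implicit in the SDP formulation. The lower bound itself holds for any index set over which the sum is well defined.
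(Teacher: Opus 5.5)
Your proof is correct, and the inequality part is exactly the paper's argument: keep only the dominant term $\exp(-x_{\min}/\mu)$, take logarithms, and multiply by $-\mu$. The paper's proof stops there and never justifies the claimed exactness as $\mu\to0$, so your matching bound $x_{\min}-\mu\log M$ fills that gap; it requires a finite index set of size $M$, which is why your remark about replacing $\Pset_u$ and $\Pset_e$ with finite sample grids is appropriate.
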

\begin{proof}
Since the exponential function is strictly positive, it follows that $\sum_n \exp(-\frac{x_n}{\mu})\geq \exp(-\frac{\underset{n}{\min}~x_n}{\mu}),\,\forall \mu>0$. Because the natural logarithm is a monotonically increasing function, applying $\log(\cdot)$ to both sides yields $\log\left(\sum_n \exp\left(-\frac{x_n}{\mu}\right)\right) \geq -\frac{\underset{n}{\min}~x_n}{\mu}$. Multiplying both sides by $-\mu$ reverses the direction of the inequality, which directly results in \eqref{eq: lem LSE}.
\end{proof}
Applying Lemma~\ref{lem: LSE}, we obtain:
\begin{subequations}
    \begin{align}
    \label{eq: LSE lower bound}
    &\underset{k,\bp_u,\bp_e}{\min}\widetilde{\RS}(f_k,\bp_u,\bp_e) \geq -\mu\log(w_\text{total}),\\
&w_\text{total}\defeq\sum_{k=1}^{K}\sum_{\bp_u\in\Pset_U}\sum_{\bp_e\in\Pset_e}w(k,\bp_u,\bp_e),\\
        &w(k,\bp_u,\bp_e)\defeq\exp(-\frac{\widetilde{\RS}(f_k,\bp_u,\bp_e)}{\mu}),
    \end{align}
\end{subequations}
where $\widetilde{\RS}(f_k,\bp_u,\bp_e)$ is evaluated using the phase-shift vector $\tilde{\bs}_k$ from the previous iteration. By assigning exponentially heavier weights to the scenarios with the lowest \gls{RS}, the optimal phase-shift vector is derived as:
\begin{equation}
\label{eq:LSE}
\bs_c=\exp\!\!\left(\jj \arg\!\left(\sum_{k=1}^{K}\sum_{\bp_u\in\Pset_U}\sum_{\bp_e\in\Pset_e}\!\!\!\frac{w(k,\bp_u,\bp_e)}{w_\text{total}}\bbeta_{k,u,e}(\gamma)\!\right)\!\!\right).
\end{equation}
We execute this procedure iteratively. At each iteration, the previous solution vectors $\tilde{\bs}_k$ are updated with the newly computed $\bs_k$ for all $k$. Once the sequence of phase-shift vectors $\bs_k, \forall k$, converges, we update the auxiliary variable $\gamma$ using \eqref{eq: best gamma}. The proposed algorithm is summarized in Algorithm~\ref{alg:cap 2}.

\subsection{Algorithm and Complexity Analysis}
\label{sec: Algorithm and Complexity Analysis}
For every iteration of Algorithm~\ref{alg:cap}, evaluating the nuclear norm in line 5 is the most computationally demanding operation, scaling with a complexity of $\bigO(N^3)$. The number of different constraints generated by $\widehat{\text{C1}}$ is the bottleneck and proportional to $|\Pset_u||\Pset_e|K$. Thus, the complexity of the Algorithm \ref{alg:cap} in total is $\bigO(I_{\max}|\Pset_u||\Pset_e|N^3K)$.

In contrast, Algorithm~\ref{alg:cap 2} exhibits significantly lower computational complexity. The most challenging operation is computing the minimum eigenvalue in \eqref{eq: calculation matrix Phi}. Because our phase-shift design operates in the \gls{mmWave} regime, where \gls{LOS} paths dominate, we focus exclusively on the \gls{LOS} links for both the user and the eavesdropper. This implies that the matrix $\bA_{k,u,e}$ has a rank of at most two effectively for any specific user and eavesdropper location pair. Let us expand $\bA^u_k=\ba_k^u(\ba_k^u)^\Herm$ and $\bA_k^e=\ba^e_k(\ba_k^e)^\Herm$. We can then decompose $\bA_{k,u,e}$ as follows:
\begin{equation}
    \bA_{k,u,e}=\begin{bmatrix}
\ba_k^u & \ba_k^e
\end{bmatrix}
\begin{bmatrix}
    1 & 0 \\
    0 & -\gamma
\end{bmatrix}\begin{bmatrix}
    (\ba_k^u)^\Herm \\
    (\ba_k^e)^\Herm
\end{bmatrix}.
\end{equation}
To reduce complexity, we define an auxiliary matrix $\bC$:
\begin{equation}
    \bC=\begin{bmatrix}
    1 & 0 \\
    0 & -\gamma
\end{bmatrix}\begin{bmatrix}
    (\ba_k^u)^\Herm\ba_k^u & (\ba_k^u)^\Herm\ba_k^e \\
    (\ba_k^e)^\Herm\ba_k^u & (\ba_k^e)^\Herm\ba_k^e
\end{bmatrix}.
\end{equation}
The second matrix in this product is independent of both $\gamma$ and $\bs_c$, allowing it to be precomputed outside the iterative loops. Since $\bC$ is now a $2\times2$ matrix, its eigenvalues can be efficiently calculated by solving the characteristic polynomial $\lambda^2-\tr\{\bC\}\lambda+\det(\bC)=0$. Thus, the complexity of the Algorithm \ref{alg:cap} in total is linear in $N$ $\bigO(T_{\max}I_{\max}|\Pset_u||\Pset_e|NK)$.

\begin{algorithm}[t]
\caption{Proposed Algorithm for Problem P8}\label{alg:cap 2}
\begin{algorithmic}[1]
\STATE Calculate $\bA_k^u(\bp_u)$ and $\bA_k^e(\bp_e),\, \forall \bp_u\in\Pset_u,\,\forall\bp_e\in\Pset_e,\,\forall k$ via \eqref{eq: Matrix A}. Initialize $\RS_\text{final}=-1000$.
\FOR{$t=1, \cdots, T_{\tmax}$}
\STATE \textbf{Initialize:} $\tilde{\bs}_c=\e^{\jj2\pi\times\mathrm{rand}(N)}, \tilde{\bs}_k=\tilde{\bs}_c^{\circ\beta_k},\,\forall k$, $\gamma^{(0)}$.
\FOR{$j=1, \cdots, J_{\tmax}$}
\FOR{$i=1, \cdots, I_{\tmax}$}
\STATE Calculate $\bA_{k,u,e}(\gamma^{(j-1)}),\bPhi_{k,u,e}(\gamma^{(j-1)}),$ and $\bbeta_{k,u,e}(\gamma^{(j-1)}),\,\forall (\bp_u,\bp_e)\in\Pset_u\times\Pset_e,\,\forall k$ via \eqref{eq: calculation matrix A}, \eqref{eq: calculation matrix Phi}, and \eqref{eq: beta calculation}, respectively.
    \STATE Calculate $\bs_c$ via \eqref{eq:LSE} and set $\tilde{\bs}_k=\bs_c^{\circ\beta_k},\,\forall k$.
    \IF{$\RS_\text{final}<\underset{k,\bp_u,\bp_e}{\min}~\RS$}
    \STATE $\RS_\text{final}=\underset{k,\bp_u,\bp_e}{\min}~\RS$,
    $\bs_\text{final}=\bs_c$.
    \ENDIF
    \ENDFOR
    \STATE Set $\bs_k=\bs_\text{final}^{\circ\beta_k},\,\forall k$ and calculate $\gamma^{(j)}$ from \eqref{eq: best gamma} when $\bS_k=\bs_k\bs_k^\Herm,\,\forall k$.
    
    \ENDFOR
    \ENDFOR
\end{algorithmic}
\end{algorithm}

\section{Performance Evaluation}
\label{sec: Performance Comparison}
\subsection{Simulation Setup}
We adopt the simulation configuration for coverage extension illustrated in Fig. \ref{fig:system model}, where the \gls{RIS} center is the origin of the Cartesian coordinate system, i.e., $[0,0,0]~\text{m}$. A set of legitimate users is uniformly distributed within a fixed region defined as $\Pset_u\in\{(\x,\y,\z):5~\text{m}\leq\x\leq 7~\text{m}, 0~\text{m}\leq\y\leq 2~\text{m}, \z=-5\}$. We also estimated a fixed area for eavesdropper in $\Pset_e\in\{(\x,\y,\z):5~\text{m}\leq\x\leq 6~\text{m}, -2~\text{m}\leq\y\leq -1~\text{m}, \z=-5\}$. The \gls{BS} is equipped with an $16 \times 16 = 256$-element \gls{UPA}, arranged along the $x$–$z$ plane and located at $[10, 10, 5]~\text{m}$. The \gls{RIS} is modeled as a \gls{ULA} with $N_y = 100$ elements aligned along the $y$-axis. The inter-element spacing for both the \gls{BS} and \gls{RIS} arrays is set to half the wavelength, $\lambda_c=5~$mm. The noise power is calculated as $\sigma_n^2=W_kN_0N_{\rm f}$  with $N_0=-174$~dBm/Hz, $N_{\rm f}=6$~dB, and bandwidth of each subcarrier $W_k=4.2~$MHz. We choose $60$~GHz carrier frequency and the total bandwidth $W=8.64~$GHz in accordance with the IEEE 802.11ay standard \cite{ieee80211ay}. The pathloss for the $k$th subcarrier is modeled as $\rho_k(d_0/d)^\sigma$, where $\rho_k=(\frac{c}{4\pi f_k})^2$ at $d_0=1$~m, and $c$ is the speed of light in vacuum. Moreover, we adopt the pathloss exponent $\sigma = (2,2,2)$ and Ricean $K$-factors in \eqref{eq: channel model} $\bar{k}_r=\tilde{k}_r=(0,0.1,0.1),\,\forall r$, for the \gls{BS}-\gls{MU}, \gls{BS}-\gls{RIS}, and \gls{RIS}-\gls{MU} channels, respectively, and $R=10$. 
To evaluate performance, we compare the proposed method against three benchmarks. These approaches do not account for the frequency-dependent phase response of individual \gls{LC}-\gls{RIS} elements:
\begin{itemize}
    \item \textbf{Benchmark 1 (All subcarriers, Area):} The \gls{RIS} phase shifts are optimized to distribute reflected power across the entire coverage area for all subcarrier frequencies while suppressing power in the eavesdropper region.
    \item \textbf{Benchmark 2 (Center frequency, Area):} The \gls{RIS} phase shifts are designed to maximize coverage area power and minimize eavesdropper region power, but only at the center frequency $f_c = 60$~GHz (Similar algorithms in \cite{Xing2024,Gu2023,delbari2024far}).
    \item \textbf{Benchmark 3 (All subcarriers, Point):} The \gls{RIS} phase shifts are optimized across all subcarrier frequencies, but only based on the exact locations of users and the eavesdropper, without explicitly shaping the beam across a broader area (Similar algorithms in \cite{Jiang2021,Arshad2025}).
\end{itemize}

 The other parameters used in the simulations are as follows: $P_t=10~$dBm, $\beta=2.4$ \cite{neuder2024architecture}, $\eta^{(0)}=0.0018$, and $T_{\max}=10$ are assumed. In addition, $J_{\max}=2$, $I_{\max}=9$ and $J_{\max}=14$, $I_{\max}=50$ are adopted for the \gls{SDP}-based and scalable algorithms, respectively.

 \begin{remk}
The MATLAB codes used to generate the simulation results in this section are publicly available online at:\\ \href{https://github.com/MohamadrezaDelbari/LC-RIS-Wideband}{\textcolor{blue}{https://github.com/MohamadrezaDelbari/LC-RIS-Wideband}}
 \end{remk}

\subsection{Simulation Results}
\begin{figure}
    \centering
    \includegraphics[width=0.4\textwidth]{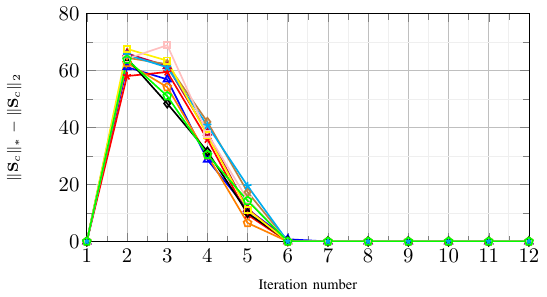}
    \caption{Convergence behavior of Algorithm \ref{alg:cap} presenting the rank constraint in P6 with different initializations versus the iteration number.}
    \label{fig:rank}
    \vspace{-5 mm}
\end{figure}

\subsubsection{Convergence and complexity comparison}

Fig.~\ref{fig:rank} illustrates the convergence behavior of Algorithm~\ref{alg:cap}. Because the algorithm is initialized with a rank-one matrix, the initial value of $\|\bS_c\|_*-\|\bS_c\|_2$ is zero, confirming that $\rank(\bS_c)=1$. As the iterations progress, this metric initially increases before ultimately decreasing, converging back to a rank-one solution that is feasible for Problem P7.

Fig.~\ref{fig: convergence} illustrates the convergence behavior of Algorithm~\ref{alg:cap 2} across 10 distinct random initializations. Specifically, Fig.~\ref{fig: convergence a} plots the \gls{LSE} surrogate function ($\widetilde{\RS}$) defined in \eqref{eq: LSE lower bound}, while Fig.~\ref{fig: convergence b} depicts the actual minimum secrecy rate ($\RS$) over the number of iterations.  As observed, the majority of these initializations successfully improve the secrecy rate. Although there is no strict guarantee of a monotonically non-decreasing \gls{RS} since \eqref{eq:LSE} provides a sub-optimal rather than a global solution to Problem P9, the algorithm effectively converges to a high-quality local optimum (plotted by the cyan curve) that significantly enhances the overall secrecy rate.

Fig.~\ref{fig: complexity} illustrates the execution times of the two proposed algorithms (the \gls{SDP}-based approach and the scalable method)\footnote{The algorithms were implemented in MATLAB R2024a and executed on an Arch Linux system equipped with an AMD Ryzen 9 7950X (16-core) CPU and 64 GB of RAM.}. As shown, the scalable method requires significantly less execution time compared to the \gls{SDP} method.  As anticipated, the computational complexity of the \gls{SDP} method scales cubically, approximating $\bigO(N^3)$, whereas the scalable solution exhibits a linear growth rate of $\bigO(N)$. For example, when $N=300$, the required time for the \gls{SDP} approach is about 1 hour, while this is less than a minute for the scalable approach.

\begin{figure}
    \centering
    \begin{subfigure}{0.4\textwidth}
    \includegraphics[width=\textwidth]{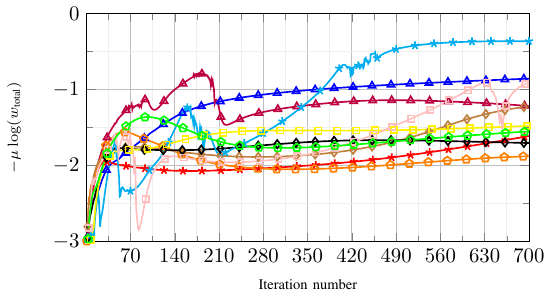}
    \caption{Convergence behavior of \gls{LSE} function.}
    \label{fig: convergence a}
    \end{subfigure}
    \begin{subfigure}{0.4\textwidth}
\includegraphics[width=\textwidth]{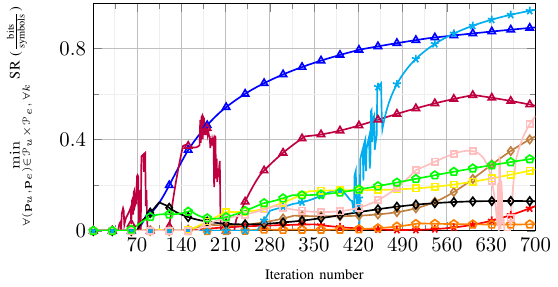}
    \caption{Minimum secrecy rate versus iteration number.}
    \label{fig: convergence b}
    \end{subfigure}
    \caption{Convergence behavior of Algorithm \ref{alg:cap 2} with different initializations versus the iteration number.}
    \label{fig: convergence}
    \vspace{- 5 mm}
\end{figure}

\begin{figure}
    \centering
    \includegraphics[width=0.4\textwidth]{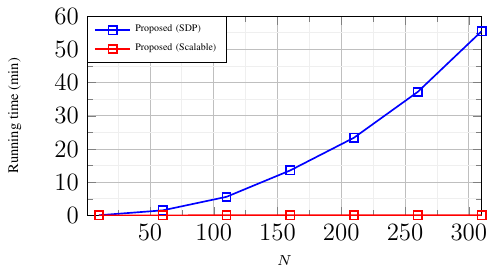}
    \caption{Running time of both proposed algorithms versus the number of the \gls{RIS} elements $(N)$.}
    \label{fig: complexity}
    \vspace{- 5 mm}
\end{figure}

\begin{figure}
    \centering
    \includegraphics[width=0.4\textwidth]{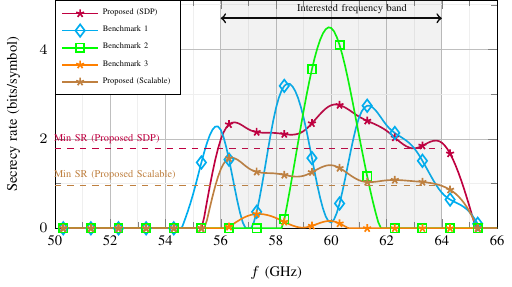}
    \caption{The minimum secrecy rate over all possible locations $\bp_u\in\Pset_u$ and $\bp_e\in\Pset_e$ versus frequency.}
    \label{fig:SR_f}
    \vspace{-5mm}
\end{figure}

\begin{figure*}
\centering
\begin{subfigure}{0.19\textwidth}
    \caption{Proposed SDP, 56 GHz}
\includegraphics[width=\textwidth,height=0.7\textwidth]{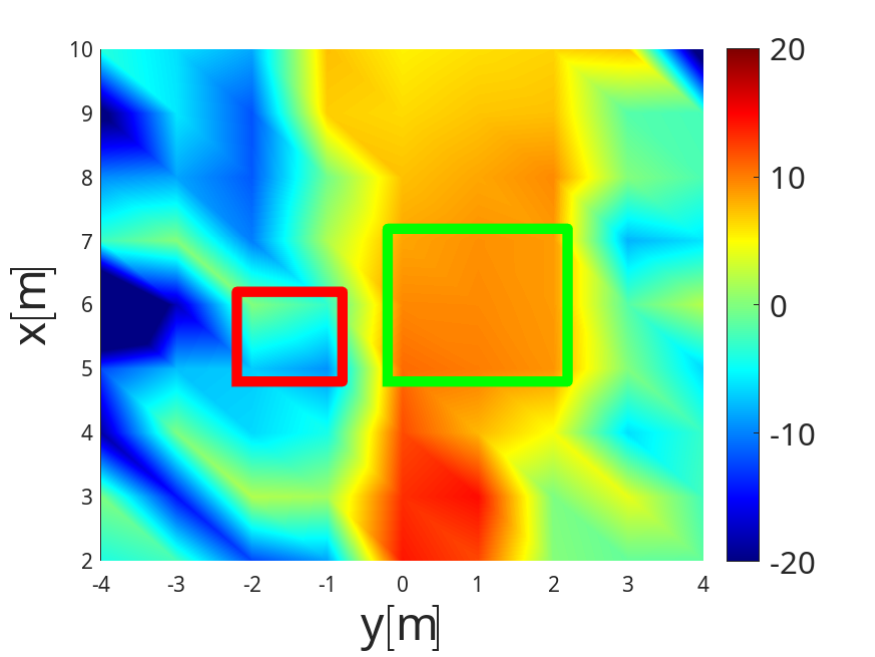}
    \label{fig: Proposed_56}
    \vspace{-5 mm}
\end{subfigure}
\begin{subfigure}{0.19\textwidth}
    \caption{Proposed scalable, 56 GHz}
    \includegraphics[width=\textwidth,height=0.7\textwidth]{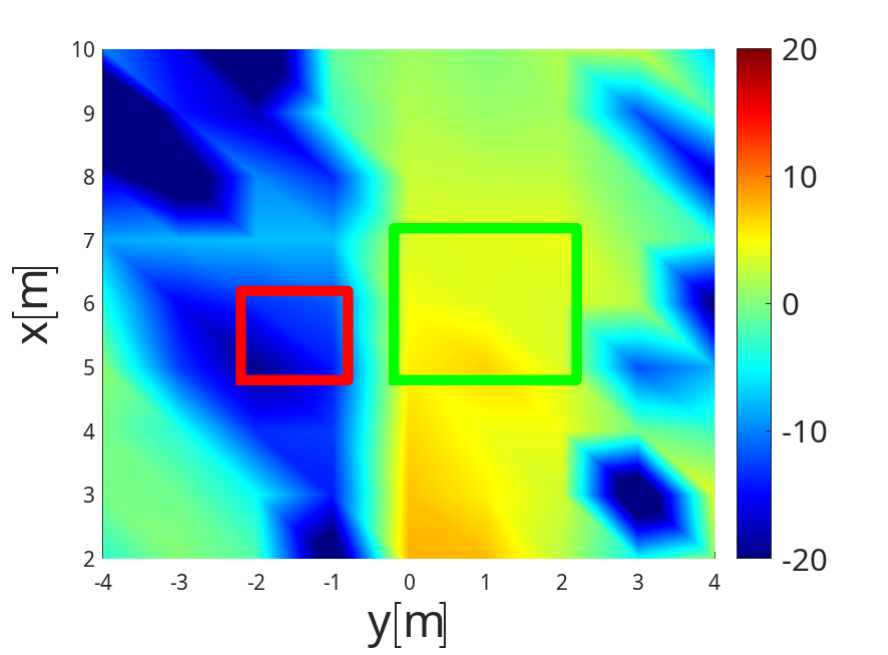}
    \label{fig: Proposed_S_56}
    \vspace{-5 mm}
\end{subfigure}
\begin{subfigure}{0.19\textwidth}
    \caption{Benchmark 1, 56 GHz}
   \includegraphics[width=\textwidth,height=0.7\textwidth]{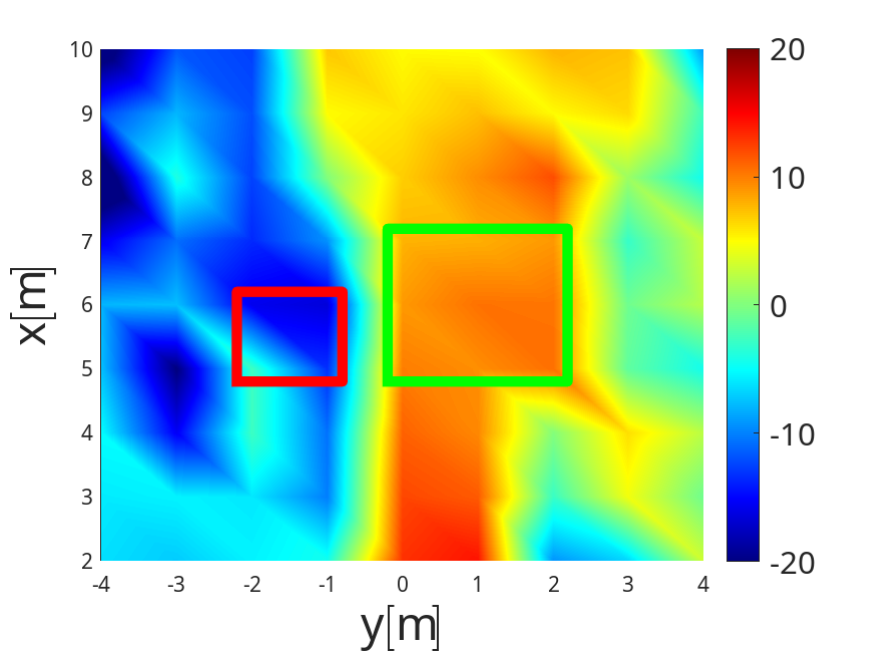}
    \label{fig: Benchmark 1_56}
    \vspace{-5 mm}
\end{subfigure}
\begin{subfigure}{0.19\textwidth}
    \caption{Benchmark 2, 56 GHz}
    \includegraphics[width=\textwidth,height=0.7\textwidth]{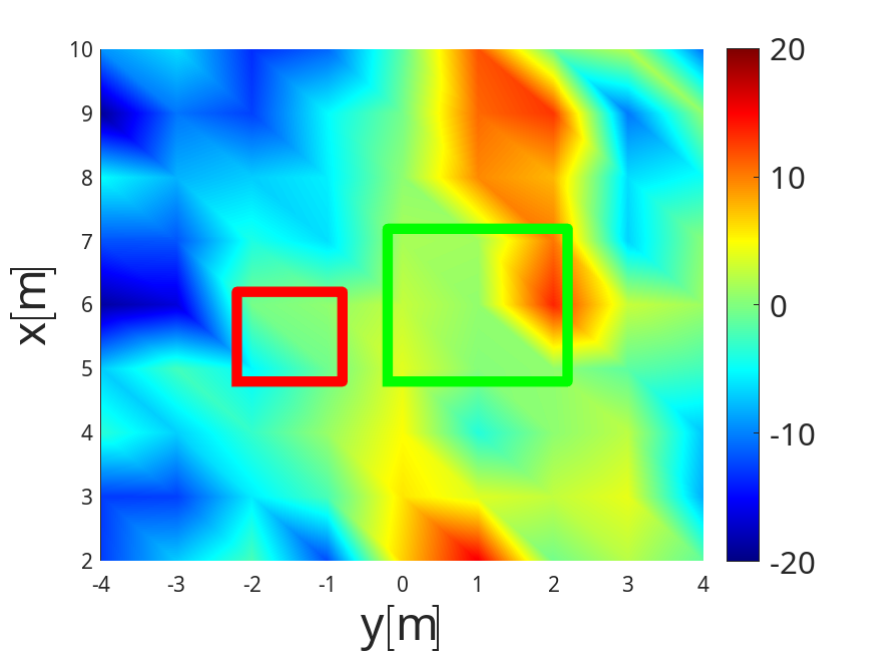}
    \label{fig: Benchmark 2_56}
    \vspace{-5 mm}
\end{subfigure}
\begin{subfigure}{0.19\textwidth}
    \caption{Benchmark 3, 56 GHz}
    \includegraphics[width=\textwidth,height=0.7\textwidth]{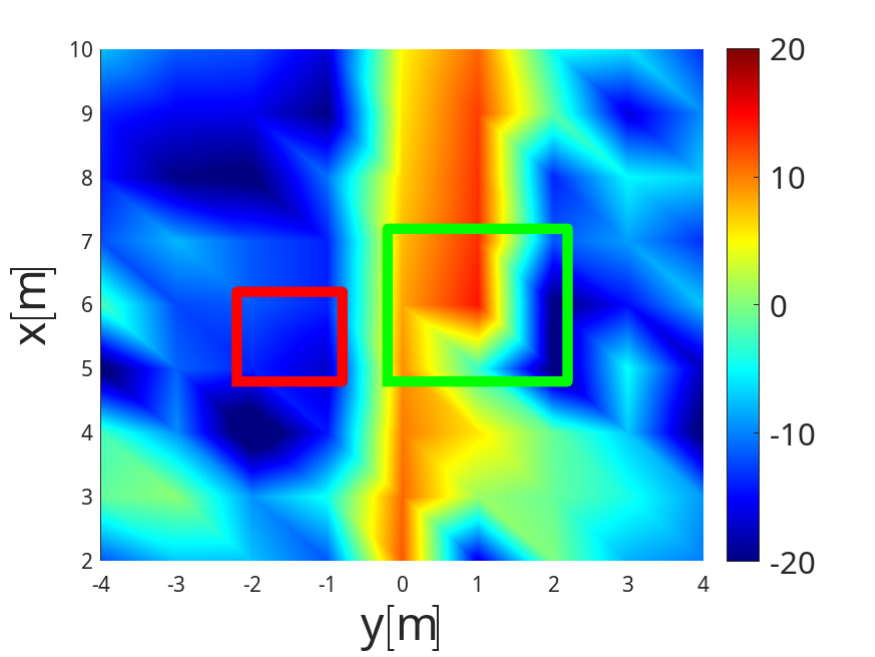}
    \label{fig: Benchmark 3_56}
    \vspace{-5 mm}
\end{subfigure}
\begin{subfigure}{0.19\textwidth}
    \caption{Proposed SDP, 60 GHz}
\includegraphics[width=\textwidth,height=0.7\textwidth]{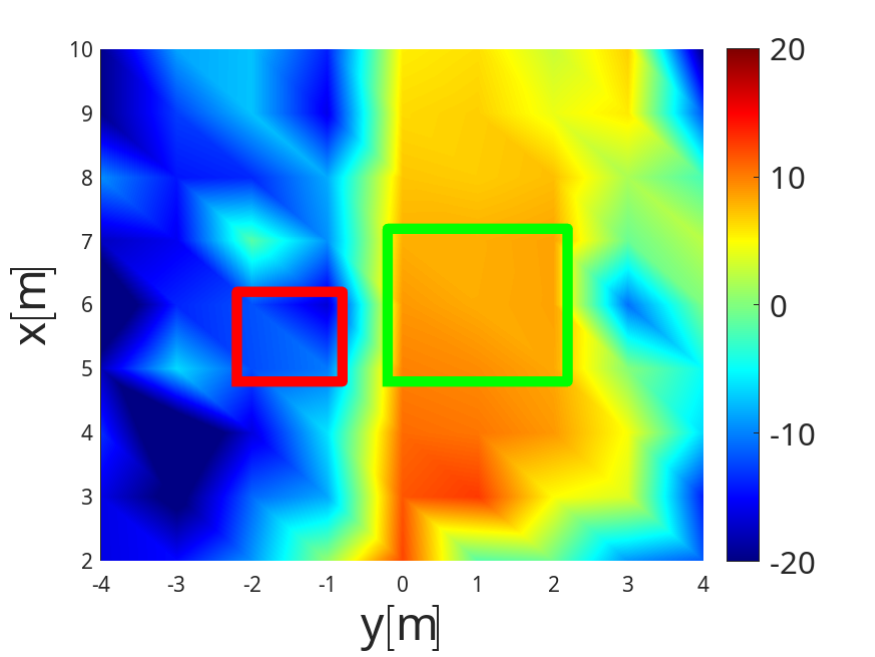}
    \label{fig: Proposed_60}
    \vspace{-5 mm}
\end{subfigure}
\begin{subfigure}{0.19\textwidth}
    \caption{Proposed scalable, 60 GHz}
    \includegraphics[width=\textwidth,height=0.7\textwidth]{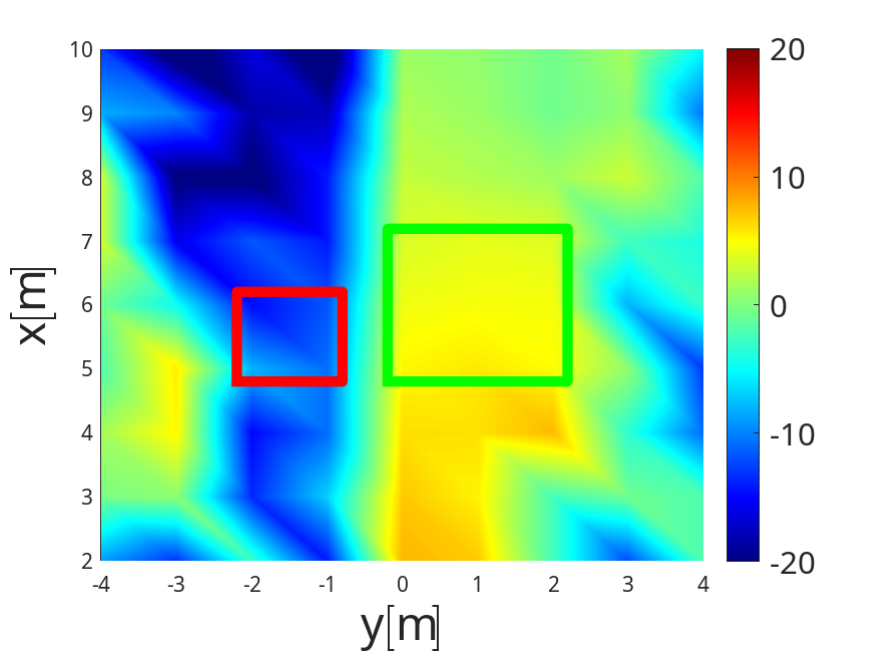}
    \label{fig: Proposed_S_60}
    \vspace{-5 mm}
\end{subfigure}
\begin{subfigure}{0.19\textwidth}
    \caption{Benchmark 1, 60 GHz}
   \includegraphics[width=\textwidth,height=0.7\textwidth]{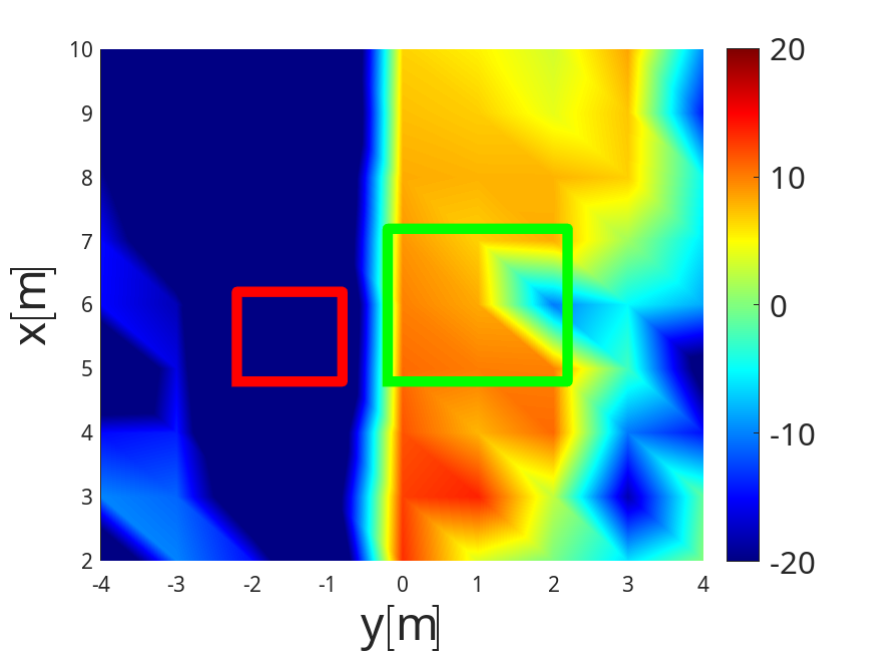}
    \label{fig: Benchmark 1_60}
    \vspace{-5 mm}
\end{subfigure}
\begin{subfigure}{0.19\textwidth}
    \caption{Benchmark 2, 60 GHz}
    \includegraphics[width=\textwidth,height=0.7\textwidth]{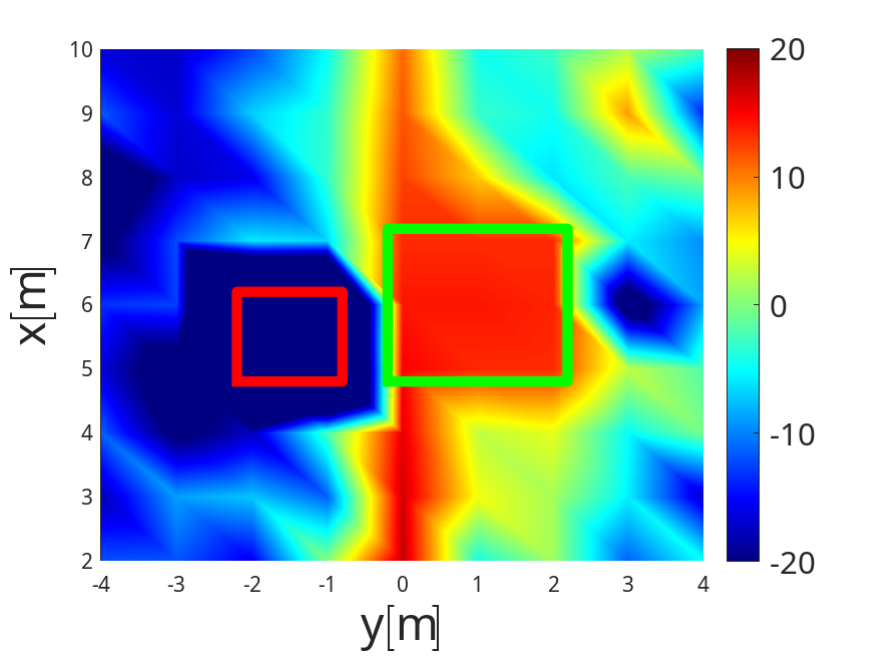}
    \label{fig: benchmark 2_60}
    \vspace{-5 mm}
\end{subfigure}
\begin{subfigure}{0.19\textwidth}
    \caption{Benchmark 3, 60 GHz}
    \includegraphics[width=\textwidth,height=0.7\textwidth]{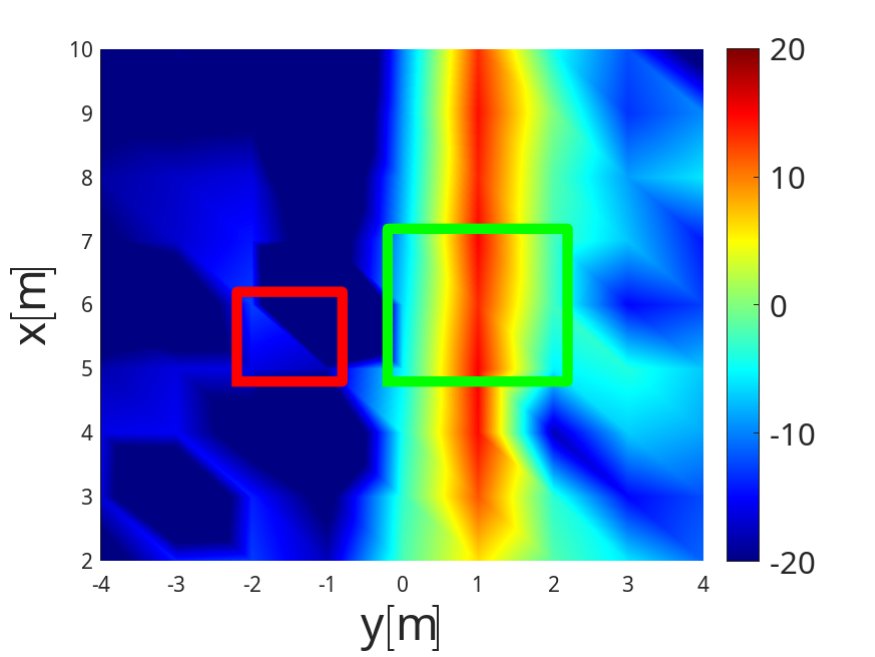}
    \label{fig: benchmark 3_60}
    \vspace{-5 mm}
\end{subfigure}
\begin{subfigure}{0.19\textwidth}
    \caption{Proposed SDP, 64 GHz}
\includegraphics[width=\textwidth,height=0.7\textwidth]{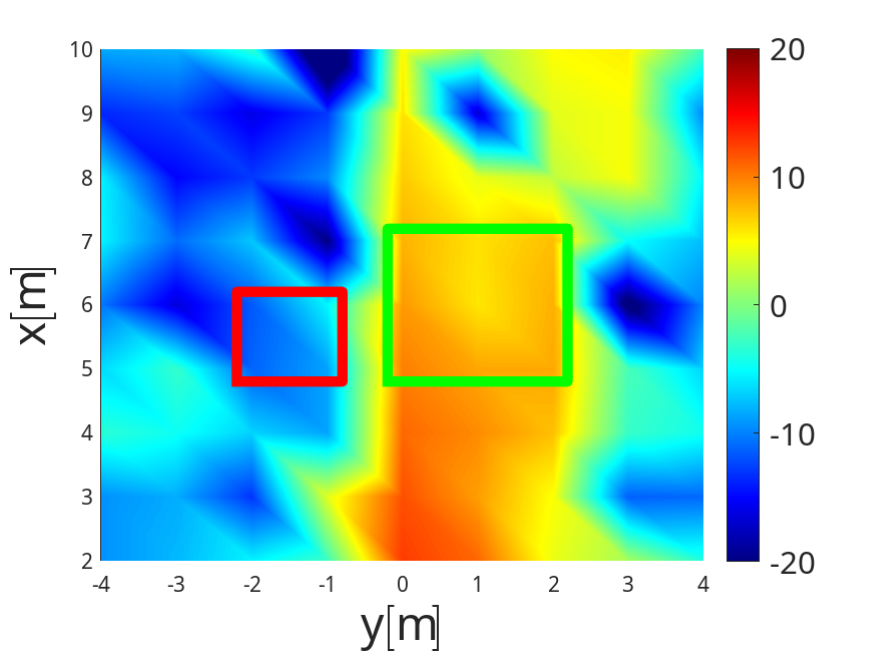}
    \label{fig: proposed_64}
    \vspace{-5 mm}
\end{subfigure}
\begin{subfigure}{0.19\textwidth}
    \caption{Proposed scalable, 64 GHz}
    \includegraphics[width=\textwidth,height=0.7\textwidth]{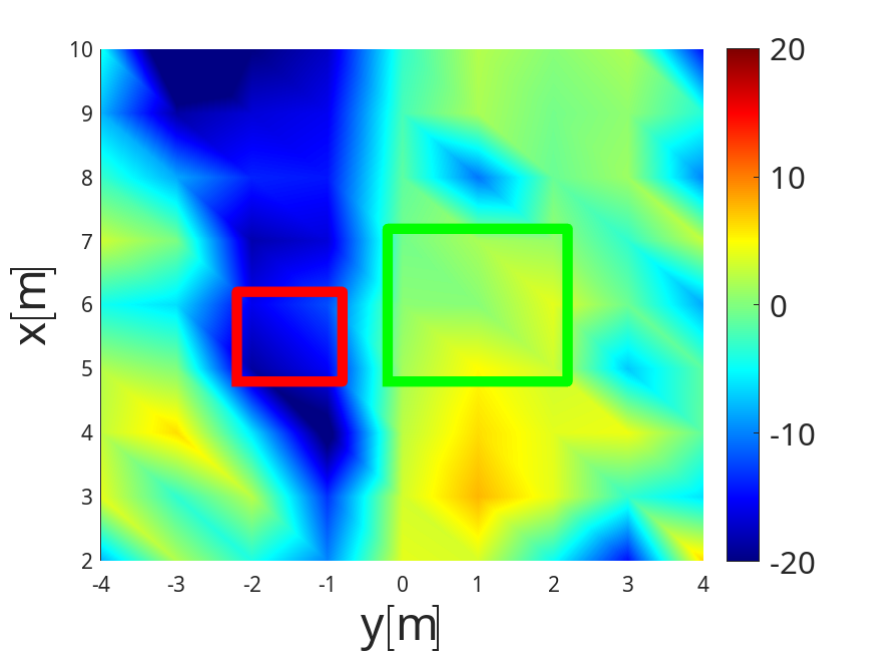}
    \label{fig: Proposed_S_64}
    \vspace{-5 mm}
\end{subfigure}
\begin{subfigure}{0.19\textwidth}
    \caption{Benchmark 1, 64 GHz}
   \includegraphics[width=\textwidth,height=0.7\textwidth]{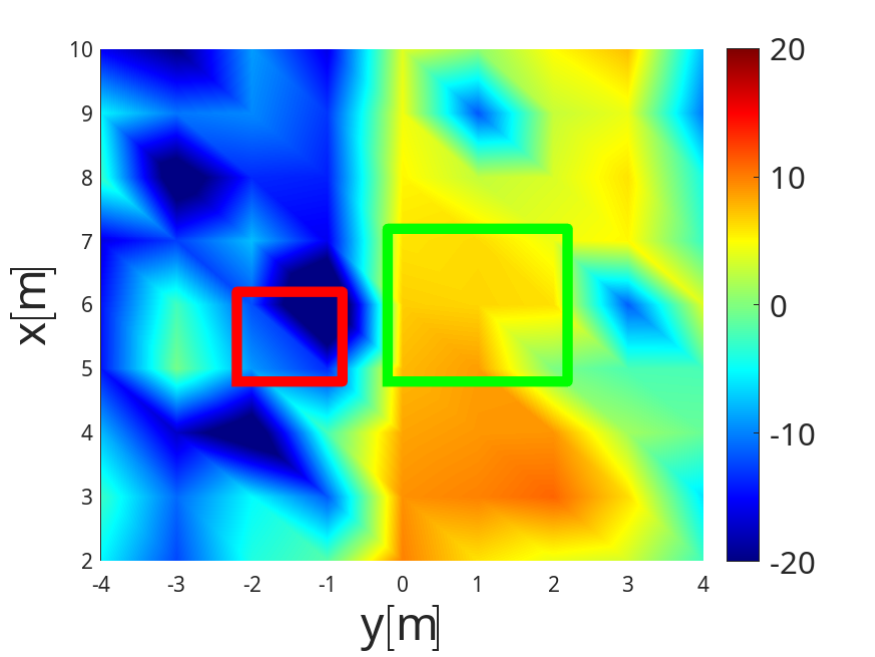}
    \label{fig: benchmark 1_64}
    \vspace{-5 mm}
\end{subfigure}
\begin{subfigure}{0.19\textwidth}
    \caption{Benchmark 2, 64 GHz}
    \includegraphics[width=\textwidth,height=0.7\textwidth]{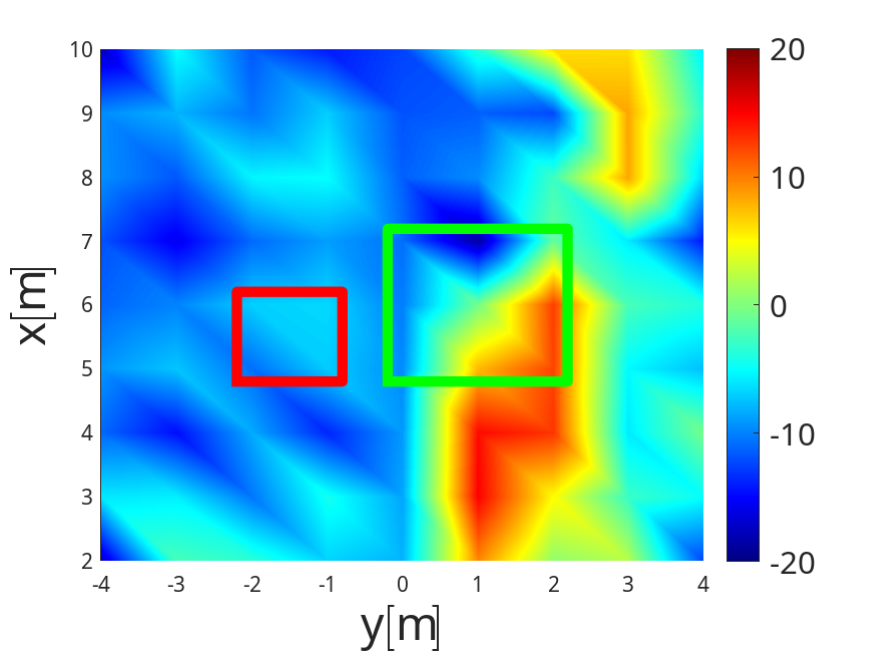}
    \label{fig: benchmark 2_64}
    \vspace{-5 mm}
\end{subfigure}
\begin{subfigure}{0.19\textwidth}
    \caption{Benchmark 3, 64 GHz}
    \includegraphics[width=\textwidth,height=0.7\textwidth]{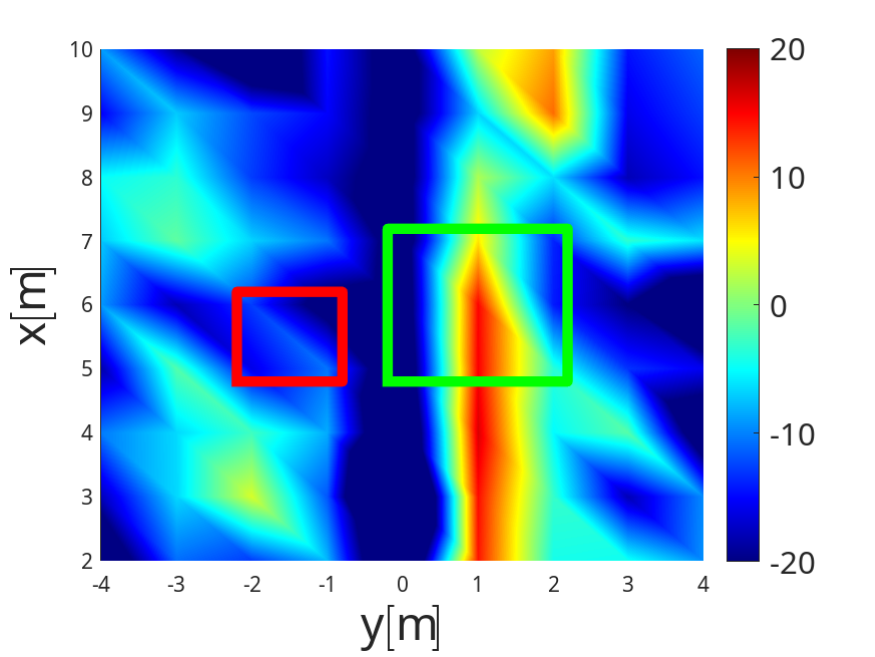}
    \label{fig: benchmark 3_64}
    \vspace{-5 mm}
\end{subfigure}
\caption{\gls{SNR} for different algorithms in different frequencies. The green area represents the user coverage region, while the red area indicates the possible locations of the eavesdropper.}
\vspace{-5 mm}
\label{fig: heat map different frequencies}
\end{figure*}

\subsubsection{Performance comparison}

Fig.~\ref{fig:SR_f} illustrates the secrecy rate (bits/symbol) \gls{wrt} frequency, as defined in \eqref{eq: SNR}, under the assumption of worst-case received power for legitimate users within the region $\Pset_u$ and maximum received power for the eavesdropper within the region $\Pset_e$. As shown in the figure, the minimum secrecy rate achieved by our proposed methods across all frequencies consistently exceeds that of the benchmark methods. While the \gls{SDP}-based approach (Algorithm~\ref{alg:cap}) yields superior performance compared to the scalable low-complexity method (Algorithm~\ref{alg:cap 2}), this improvement comes at the cost of significantly higher computational complexity. Benchmark 3 distributes the reflected power across all frequencies; however, since its design depends on the exact user and eavesdropper locations, its worst-case performance is inferior to the others. Benchmark 2 concentrates power within the target area but only at the center frequency, which results in a secrecy rate that degrades as the signal frequency deviates from the center. Finally, Benchmark 1 optimizes the phase shifts based on the signal frequency and the considered areas, but it neglects the frequency-dependent phase response of individual \gls{RIS} elements (similar to the other benchmarks).

To illustrate how the \gls{SNR} is distributed spatially at a specific frequency, we present heat maps of the \gls{SNR} for all five methods in Fig.~\ref{fig: heat map different frequencies}, corresponding to the carrier frequency $f_k = \{56, 60, 64\}$~GHz. The coverage area and the eavesdropper area are indicated by green and red rectangles, respectively. In Figs.~\ref{fig: Benchmark 3_56}, \ref{fig: benchmark 3_60}, and \ref{fig: benchmark 3_64}, corresponding to Benchmark 3, the \gls{SNR} within the desired area is non-uniform, as this method focuses only on frequency and not on spatial regions. This non-uniformity leads to degraded performance. Figs.~\ref{fig: Benchmark 2_56} and \ref{fig: benchmark 2_64} show the result of Benchmark 2, where noticeable beam splitting occurs due to its optimization being limited to the center frequency which is plotted in Fig.~\ref{fig: benchmark 2_60}. As a result, the \gls{SNR} in the eavesdropper area increases. The outcome for Benchmark 1 is illustrated in Figs.~\ref{fig: Benchmark 1_56}, \ref{fig: Benchmark 1_60}, and \ref{fig: benchmark 1_64}. Since it ignores the frequency-dependent phase response of each \gls{RIS} element, the \gls{SNR} decreases in parts of the desired area and, conversely, increases in the eavesdropper area, thereby reducing its overall performance. In contrast, the proposed \gls{SDP}-based method (Figs.~\ref{fig: Proposed_56}, \ref{fig: Proposed_60}, and \ref{fig: proposed_64}) effectively distribute the \gls{SNR} uniformly across the desired area while simultaneously minimizing it in the eavesdropper region, demonstrating its superior spatial and frequency-aware design.  The proposed scalable algorithm achieves a similar spatial distribution, as illustrated in Figs.~\ref{fig: Proposed_S_56}, \ref{fig: Proposed_S_60}, and \ref{fig: Proposed_S_64}. However, it is evident that the received \gls{SNR} in the \gls{MU} area is noticeably lower than that of the \gls{SDP} approach, although its execution time is significantly shorter.
\section{Experimental Results for Wideband LC-RIS}
\label{sec: Experimental Results}

In this section, we evaluate the proposed algorithm using a small-scale proof-of-concept implementation of an \gls{LC}-\gls{RIS}. Our focus is to establish a minimal experimental setup that enables us to study the frequency-dependent properties of the \gls{LC}-\gls{RIS} and evaluate the effectiveness of the proposed algorithm in a real-world setting.

\begin{figure}
    \centering
    \includegraphics[width=0.3\textwidth]{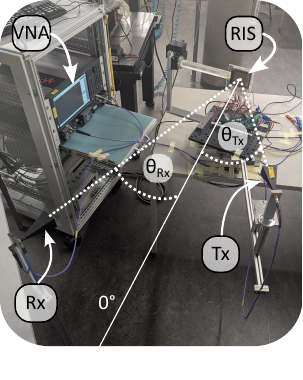}
    \vspace{-5 mm}
    \caption{Experimental setup for measuring the \gls{SNR} in different frequencies at a fixed location. For more details of the setup, you can see \cite[Fig.~5]{neuder2023architecture}.}
    \label{fig:LC_setup}
    \vspace{-5mm}
\end{figure}

\begin{figure*}
\begin{subfigure}{0.3\textwidth}
    \centering
    \includegraphics[width=1\textwidth]{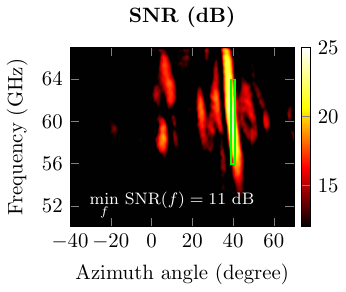}
    \caption{Measurement results for the benchmark 1 phase shift optimization show that the reflected power at 56 GHz and 64 GHz is negligible. This algorithm only focuses on maximizing the \gls{Rx} power in 60 GHz.}
    \label{fig:experimental_benchmark}
\end{subfigure}
\hfill
\begin{subfigure}{0.3\textwidth}
    \centering
    \includegraphics[width=1\textwidth]{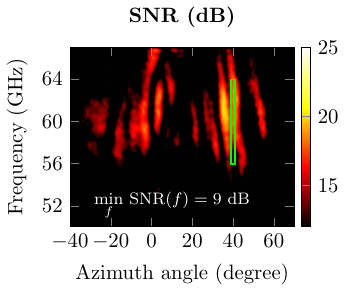}
    \caption{Measurement results for the benchmark 2 phase shift optimization show that the reflected power at 56 GHz and 64 GHz is not good. This algorithm overlooks the frequency dependency in each element.}
    \label{fig:experimental_benchmark 2}
\end{subfigure}
\hfill
\begin{subfigure}{0.3\textwidth}
    \centering
    \includegraphics[width=1\textwidth]{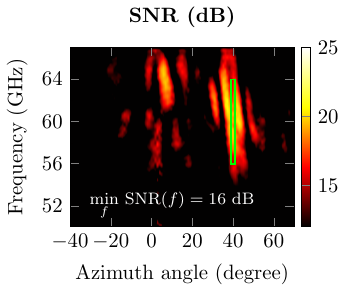}
    \caption{Measurement results for the proposed phase shift optimization show that the reflected power is approximately equal from 56 GHz to 64 GHz, as its design is according to the frequency dependency in elements.}
    \label{fig:experimental_proposed}
\end{subfigure}
\caption{A comparison between the three algorithms is conducted when there are different conditions for phase shift designs. In this analysis, the targeted angle for all methods is $40^\circ$. The green area represents the \gls{Rx} location for the entire frequency band of interest.}
%\vspace{-5mm}
\end{figure*}

\subsection{Experimental Setup}
Our experimental \gls{LC}-\gls{RIS} prototype consists of a $30 \times 25$ grid, totaling 750 elements. To achieve 1D beamforming, a uniform bias voltage is supplied to every element within each individual column\footnote{While this column-level control is constrained by the hardware of our current testbed, our proposed algorithm remains fully adaptable to 2D beamforming applications.}. Consequently, the \gls{LC}-\gls{RIS} can actively sweep reflected signals in the azimuthal plane, while the elevation angle is held constant. These bias voltages are driven by a 1 kHz square wave, generated by a Texas Instruments 12-bit DAC60096 evaluation module (EVM) capable of delivering between $\pm$10.5 V. Finally, as shown in Fig.~\ref{fig:LC_setup}, the measurement environment incorporates a pair of MI-Wave 25 dBi V-band horn antennas \cite{delbari2026fast}.
The \gls{Tx} antenna is fixed at an azimuth angle of $\theta_\mathrm{Tx}=-30^\circ$ and positioned 1 m away from the \gls{LC}-\gls{RIS}. Conversely, the \gls{Rx} antenna is mounted on a rotational stage at a fixed radial distance of 55 cm from the surface, allowing for measurements across arbitrary azimuth angles. For the experiments presented herein, the receiver is positioned at a target azimuth angle of $\theta_\mathrm{Rx}=10^\circ$ \gls{wrt} the \gls{RIS}. This results $10^\circ-(-30^\circ)=40^\circ$ beam steering \gls{wrt} the \gls{Tx}'s angle. Finally, all channel measurements were conducted using a Keysight Technologies PNA-X N5247A vector network analyzer.

Because the number of elements in a single row is insufficient to illuminate a broad area, we optimized the phase shifts to target a single focal point. Furthermore, due to hardware constraints regarding the number of available \gls{Rx} devices, our measurements are limited to the received \gls{SNR}. Nevertheless, even with this restricted metric, we demonstrate that accounting for the frequency dependency of individual \gls{RIS} elements significantly impacts overall performance. Specifically, we compare our proposed method against two alternative phase-shifter optimization strategies. Benchmark 1 designs the phase shifts to maximize the \gls{Rx} power strictly at the 60~GHz center frequency. Benchmark 2 optimizes the phase shifts to reflect power toward $40^\circ$ (\gls{wrt} the \gls{Tx}'s angle) across the entire frequency range, but it ignores the frequency-dependent behavior of the \gls{LC}-\gls{RIS} elements. In contrast, our proposed method explicitly optimizes the phase shifters by incorporating the frequency-dependent characteristics of the \gls{LC}-\gls{RIS} elements across the entire desired bandwidth.

\subsection{Experimental Results}
Figs.~\ref{fig:experimental_benchmark}, \ref{fig:experimental_benchmark 2}, and \ref{fig:experimental_proposed} illustrate the heat maps of the reflected power for Benchmark 1, Benchmark 2, and the proposed method, respectively.  In these plots, the $y$-axis represents the operating frequency, while the $x$-axis denotes the azimuth angle. At the target angle of $40^\circ$, the received \gls{SNR} achieved by the proposed method is distributed almost uniformly across the entire desired frequency band. This consistent wideband performance stands in stark contrast to the benchmark schemes, which exhibit significant power fluctuations and signal degradation across the measured frequencies.

Fig.~\ref{fig: experimental result SNR} presents the received \gls{SNR} across different frequencies obtained from our experimental implementation. As observed, while Benchmark 1 achieves the peak performance near 60~GHz, its effectiveness degrades significantly at the band edges (i.e., 56~GHz and 64~GHz). Benchmark 2 maintains adequate performance between 60~GHz and 64~GHz, but it suffers from severe signal degradation near 56~GHz. In contrast, our proposed method successfully distributes the received power almost uniformly across the entire operating band, from 56~GHz to 64~GHz.

\begin{figure}
    \centering
    \includegraphics[width=0.4\textwidth]{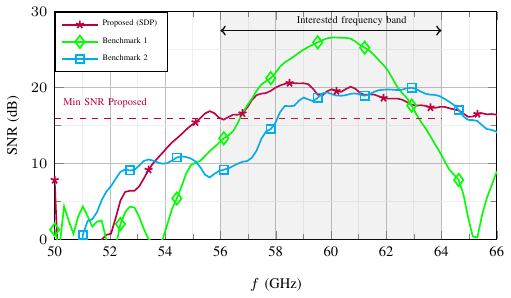}
    \caption{The received \gls{SNR} measurement at the \gls{Rx} location versus frequency.}
    \label{fig: experimental result SNR}
    \vspace{-5mm}
\end{figure}

\section{Conclusion and Future Directions}
\label{sec: Conclusion}
In this paper, we studied the effect of frequency variations on the phase-shift response of \gls{LC}-\gls{RIS} elements. To address the inherent frequency dependency in wideband systems, we proposed two novel algorithms that explicitly account for this behavior in the phase-shift design, with the objective of maximizing the secrecy rate within a targeted spatial area. The first algorithm leverages \gls{SDP}, while the second employs a surrogate function approach, which successfully reduces the computational complexity compared to the \gls{SDP}-based method. Extensive simulation results confirmed the critical importance of incorporating the frequency-dependent characteristics of \gls{LC}-\gls{RIS} elements, demonstrating significant performance gains over baseline approaches that neglect this factor. Specifically, the proposed \gls{SDP}-based and scalable methods achieved secrecy rates of approximately 2 and 1 bits/symbol, respectively, over an 8 GHz bandwidth at a center frequency of 60 GHz. Furthermore, our experimental validations clearly demonstrated that the proposed framework effectively enhances the received power in practical scenarios compared to existing benchmark schemes.

\bibliographystyle{IEEEtran}
\bibliography{References}

\end{document}